\documentclass[showpacs,amsmath,amssymb,twocolumn,pra,superscriptaddress]{revtex4-1}

\usepackage[dvips]{graphicx} 
\usepackage{amsfonts}
\usepackage{amssymb}
\usepackage{amscd}
\usepackage{amsmath}    
\usepackage{enumerate}
\usepackage{epsfig}
\usepackage{subfigure}
\usepackage{xcolor}
\usepackage{amsthm}
\usepackage{framed}
\usepackage{multirow}

\newtheorem{theorem}{Theorem}

\newtheorem{observation}{Observation}

\newcommand{\bra}[1]{\mbox{$\left\langle #1 \right|$}}
\newcommand{\ket}[1]{\mbox{$\left| #1 \right\rangle$}}

\newcommand{\comments}[1]{}

\begin{document}
\preprint{APS/123-QED}
\title{Entanglement detection under coherent noise: Greenberger-Horne-Zeilinger-like states
}
\date{\today}
\author{You Zhou}
\email{zyqphy@gmail.com}
\affiliation{Center for Quantum Information, Institute for Interdisciplinary Information Sciences, Tsinghua University, Beijing 100084, China}

\begin{abstract}
Entanglement is an essential resource in many quantum information tasks and entanglement witness is a widely used tool for its detection.
In experiments the prepared state generally deviates from the target state due to some noise. Normally the white noise model is applied to quantifying such derivation and in the same time reveals the robustness of the witness. However, there may exist
other kind of noise, in which the coherent noise can dramatically ``rotate" the prepared state. In this way, the coherent noise is likely to lead to a failure of the detection, even though the underlying state is actually entangled. In this work, we propose an efficient entanglement detection protocol for $N$-partite Greenberger-Horne-Zeilinger (GHZ)-like states. The protocol can eliminate the effect of the coherent noise and in the same time feedback the corresponding noise parameters, which are beneficial to further improvements on the experiment system. In particular, we consider two experiment-relevant coherent noise models, one is from the unconscious phase accumulation on $N$ qubits, the other is from the rotation on the control qubit. The protocol effectively realizes a family of entanglement witnesses by postprocessing the measurement results from $N+2$ local measurement settings, which only adds one more setting than the original witness specialized for the GHZ state.
Moreover, by considering the trade-off between the detection efficiency and the white-noise robustness, we further reduce the number of local measurements to $3$ without altering the performance on the coherent noise.
Our protocol can enhance the entanglement detection under coherent noises and act as a benchmark for the state-of-the-art quantum devices.

%
%
%
%
%
%
%
\end{abstract}

\maketitle
\section{Introduction}
Entanglement, as a unique feature of quantum mechanics, plays an essential role in many quantum information processing tasks, such as quantum teleportation \cite{Bennett1993Teleporting}, quantum cryptography \cite{Bennett84cryptography,Ekert1991cryptography}, non-locality test \cite{Brunner2014nonlocality}, quantum computing \cite{Nielsen2011Quantum}, quantum simulation \cite{Lloyd1996Simulators} and quantum metrology \cite{Wineland1992squeezing,Giovannetti2004measure}.
Consequently, it is quite significant to detect entanglement in experimental systems, which not only acts as benchmark and calibration of the underlying platform, but also certifies useful quantum resources for the further information processing. So far, tremendous efforts have been devoted to the realization of multipartite entanglement in various systems \cite{Monz2011Entanglement,Britton2012trapped,Nigg2014encoded,Song2017Entanglement,Gong2019Genuine,Wang2016Entanglement,chen2017observation,12photon, Toth2014Detecting,Luo2017Deterministic,Lange2018atomic}. In particular, the genuine multipartite entanglement is witnessed in 14-ion-trap-qubit \cite{Monz2011Entanglement}, 10-superconducting-qubit \cite{Song2017Entanglement}, and 12-photon-qubit systems \cite{12photon}, with the target state being the Greenberger-Horne-Zeilinger (GHZ) state. 

The detection of genuine multipartite entanglement is generally a challenging task, since the dimension of the Hilbert space increases exponentially with respect to the system size. Compared with the unfeasible quantum state tomography \cite{Vogel1989Determination,Paris2004esimation}, the entanglement witness is an useful tool to realize it \cite{TERHAL2001witness,GUHNE2009detection}. The witness is usually a Hermitian operator $\mathcal{W}$, satisfying that $\mathrm{Tr}(\mathcal{W}\sigma_s)\geq 0$ for all separable states $\sigma_s \in \mathcal{S}_{sep}$, with $\mathcal{S}_{sep}$ the separable state set; $\mathrm{Tr}(\mathcal{W}\ket{\Psi}\bra{\Psi})<0$ for some entangled state $\ket{\Psi}$, such as the GHZ state. Consequently, if $W$ returns a negative value, one can confirm that the prepared state is entangled; a non-negative value tells nothing, denoted as a null result.

A straightforward way to construct a witness is based on the intuition that the prepared state $\rho_{pre}$ is entangled if it is close to an entangled target state, say $\ket{\Psi}$. To be specific,
\begin{equation}\label{Eq:witness}
\begin{aligned}
\mathcal{W}=\alpha \mathbb{I}-\ket{\Psi}\bra{\Psi},
\end{aligned}
\end{equation}
where $\alpha$ is the maximal fidelity between $\ket{\Psi}$ and all separable states $\sigma_s$, i.e., $\alpha=sup\{\bra{\Psi}\sigma_s\ket{\Psi}|\sigma_s \in S_{sep}\}$. On account of the convexity of $S_{sep}$, $\alpha$ can be determined by the maximal Schmit coefficient of $\ket{\Psi}$ optimized under all bipartitions \cite{Bourennane2004Experimental}. For instance, $\alpha=\frac1{2}$ for the GHZ state.
The expectation value of $\mathcal{W}$ shows, $\mathrm{Tr}(\mathcal{W}\rho_{pre})=\alpha-\mathrm{Tr}(\ket{\Psi}\bra{\Psi}\rho_{pre})$, which is directly related to measuring fidelity.

Normally, the multipartite projector $\ket{\Psi}\bra{\Psi}$ is decomposed with a few of local measurement settings (LMSs) \cite{Terhal2002Detecting,Guhne2002local}, for example the Pauli operator $\sigma_x^{\otimes N}$, which can be realized in experiments. Even for one LMS, it needs thousands of times of the measurement to obtain the estimation of the expectation value. Thus, the total number of LMSs characterize the efficiency of the witness. For the GHZ state, it needs $N+1$ LMSs \cite{Guhne2007Toolbox}.
On the other hand, the robustness is another key feature of a witness, which benchmarks its detection ability. Generally, one applies white noise tolerance to characterize the robustness, i.e.,
\begin{equation}\label{}
\begin{aligned}
\rho=(1-p)\ket{\Psi}\bra{\Psi}+p\frac{\mathbb{I}}{2^N},
\end{aligned}
\end{equation}
which moves the target state towards the maximal mixed state. The maximal $p_{max}$ such that $\mathrm{Tr}(\mathcal{W}\rho)\leq0$ describes the robustness of the witness. 

Since the witness $\mathcal{W}$ shown in Eq.~\eqref{Eq:witness} is designed specifically for the target state $\ket{\Psi}$, it may return null results for some other entangled states. This phenomenon may become serious when the experiment system suffers from the coherent noise, i.e.,
\begin{equation}\label{}
\begin{aligned}
\ket{\Psi_{\textrm{pre}}}=U_{\textrm{noise}}\ket{\Psi}.
\end{aligned}
\end{equation}
Since the unitary evolution can ''rotate" the state dramatically (not like the translation in the white noise case), the white noise tolerance corresponding to the result state can decrease, and is possibly outside the detection range of the witness in some case. See Fig.~\ref{Fig:CohNoise} for an illustration.
Taking the GHZ state as an example, according to Eq.~\eqref{Eq:witness}, the fidelity-based witness shows,
\begin{equation}\label{Eq:witnessGHZ}
\begin{aligned}
\mathcal{W}_{GHZ}=\frac{1}{2} \mathbb{I}-\ket{GHZ}\bra{GHZ},
\end{aligned}
\end{equation}
where $\ket{GHZ}=\frac1{\sqrt{2}}(\ket{0}^{\otimes N}+ \ket{1}^{\otimes N})$. If the prepared state becomes $\ket{\Psi_{pre}}=\frac1{\sqrt{2}}(\ket{0}^{\otimes N}-\ket{1}^{\otimes N})$ under some coherent error that affects the phase, the witness gives a null result $\mathrm{Tr}(\mathcal{W}_{GHZ}\ket{\Psi_{pre}}\bra{\Psi_{pre}})=\frac1{2}>0$. Note that here $\ket{\Psi_{pre}}$ is entangled but one cannot confirm this by using the witness $\mathcal{W}_{GHZ}$ in Eq.~\eqref{Eq:witnessGHZ}.

To the best of our knowledge, the entanglement detection under realistic coherent noises still lacks studying. The investigation along this direction can offer us two main advantages. On the one hand, it can supply useful tools to tackle with coherent noises and hence enhance our entanglement detection ability; on the other hand, it is also helpful to the benchmarking and even the calibration of experimental systems. This is beneficial for the ultimate goal---fault-tolerant quantum computation \cite{Nielsen2011Quantum,gottesman1997stabilizer}, as the coherent noise leads to a much worse threshold than the stochastic ones \cite{Aliferis2006Threshold}.

In this work, we study the entanglement detection under coherent noises and focus on the GHZ state, which is essential in many quantum information tasks, such as Bell-nonlocality \cite{Brunner2014nonlocality}, multipartite quantum key distribution \cite{Chen2007Multi}, quantum secret sharing \cite{Hillery1999secret,Cleve1999Share}, and quantum metrology \cite{Wineland1992squeezing,Giovannetti2004measure}. We show an entanglement detection protocol that can effectively eliminate the influence of certain types of coherent noises for the GHZ state. Our protocol only adds $1$ LMS than the original one, which needs $N+1$ LMSs. In particular, the protocol can effectively realize a family of entanglement witnesses with respective to the coherent noise, and one can select the optimal one by only postprocessing the measurement results. The protocol can also help us to estimate corresponding noise parameters and further give feedback to the experiment system. Moreover, we also consider the reduction of the number of LMSs, which makes the experimental realizations more efficient.

Our paper is organized as follows. In Sec.~\ref{Sec:noiseM}, two coherent noise models of the GHZ state are proposed, one is generated by the unconscious phase accumulation on $N$ qubits, the other is due to the rotation on the control qubit. The overall noise model is the combination of the coherent part and the white noise part. In Sec.~\ref{Sec:EWfull}, we show the detection protocol with $N+2$ LMSs, used to witness the entanglement under coherent noises and further feed back the noisy parameters.
In Sec.~\ref{Sec:EWeff}, we further reduce the number of LMSs to $3$, and propose more efficient witnesses.
Sec.~\ref{Sec:conclude} is the conclusion and outlook.

\begin{figure}[thb]
\centering
\resizebox{6cm}{!}{\includegraphics[scale=1]{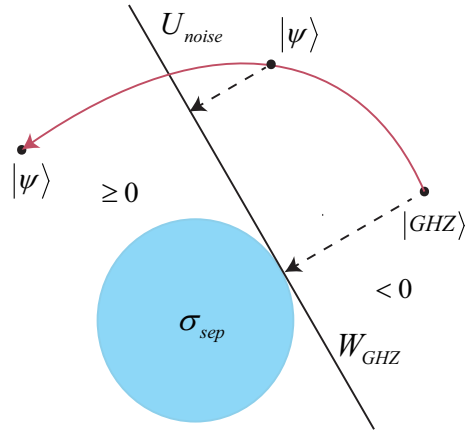}}
\caption{The effect of the coherent noise and the failure of the entanglement detection. The blue disk labels the convex separable state set $\mathcal{S}_{sep}$, and the witness $\mathcal{W}_{GHZ}$ is represented by the right solid line (a hyperplane in the state space) that is tangent to the disk. The white noise displaces the GHZ state towards the witness line, and the length of this dotted arrowed line can reveal the white noise tolerance. The red curve, labeling the coherent noise, ``rotates" the GHZ state to some $\ket{\Psi}$. Due to the coherent noise, the white noise tolerance with respective to $\mathcal{W}_{GHZ}$ decreases, as shown by the shortening of the dotted line. Finally, $\ket{\Psi}$ can moves to the other side of the hyperplane, thus its entanglement cannot be witnessed using $\mathcal{W}_{GHZ}$.}
\label{Fig:CohNoise}
\end{figure}
\section{The noise model}\label{Sec:noiseM}
In this section, we show two realistic coherent noise models of the GHZ state, which will be analysed in the following Sec \ref{Sec:EWfull}. One is caused by the unconscious phase accumulating on all qubits, the other is due to the single qubit rotation on the first control qubit.

Let us first review the white noise model.
Usually, one uses the white noise to analyse the noise tolerance of the entanglement witness, i.e., mixing the original state with the maximally mixed state,
\begin{equation}\label{}
\begin{aligned}
\Phi_p(\rho)=(1-p)\rho+p\frac{\mathbb{I}}{2^N}.
\end{aligned}
\end{equation}
For the GHZ state, the resulting state is
\begin{equation}\label{}
\begin{aligned}
\rho_w^p=(1-p)\ket{GHZ}\bra{GHZ}+p\frac{\mathbb{I}}{2^N}.
\end{aligned}
\end{equation}
The corresponding noise tolerance is determined by $\mathrm{Tr}(\mathcal{W}_{GHZ}\rho_w^p)=0$ where $\mathcal{W}_{GHZ}$ is defined in Eq.~\eqref{Eq:witnessGHZ}, and it equals to,
\begin{equation}\label{Eq:WhTol}
\begin{aligned}
p_{max}=\frac{2^{(N-1)}}{{2^N-1}}\ \ (\simeq 0.5,N \rightarrow \infty).
\end{aligned}
\end{equation}
The white noise is generated by the depolarizing channel, and it effectively displaces the original state $\ket{GHZ}$ towards the maximally mixed state in the state space, as shown in Fig. \ref{Fig:CohNoise}.  However, generally speaking, the coherent noise could appear in the experiment due to some system errors, as we illustrated in the following sections.

\subsection{Model 1: Unconscious phase accumulation}\label{Ssec:phase}
In experimental realizations, the degree of freedom of N-qubit is generally encoded in $N$ two-level subsystems, such as the ground state and the excited state of atoms. There might appear unconscious phase accumulation between $\ket{0}$ and $\ket{1}$ of qubits that dramatically transforms the state. To be specific, this kind of coherent error can be modeled as,
\begin{equation}\label{Eq:CnoiseM}
\begin{aligned}
&U_z=\bigotimes_{j=1}^N U_j^z,\\
&U_j^z=\ket{0}_j\bra{0}+e^{i\phi_j}\ket{1}_j\bra{1},
\end{aligned}
\end{equation}
where $U_j^z$ denote the rotation around the $Z$-basis on the $j$-th qubit.
If we apply the above coherent noise on the GHZ state, it shows
\begin{equation}\label{Eq:stateCN}
\begin{aligned}
\ket{\Psi_{\phi}}&=U_z\ket{GHZ}\\
&=\bigotimes_{j=1}^N U_j^z\ket{GHZ}\\
&=\frac1{\sqrt{2}}\left(\ket{0}^{\otimes N}+e^{i\sum_{j=1}^N\phi_j} \ket{1}^{\otimes N}\right)\\
&=\frac1{\sqrt{2}}(\ket{0}^{\otimes N}+e^{i\phi} \ket{1}^{\otimes N}),\\
\end{aligned}
\end{equation}
where $\phi\doteq\sum_{j=1}^N\phi_j$. Similar as the white noise case, the tolerance of  $\mathcal{W}_{GHZ}$ in Eq.~\eqref{Eq:witnessGHZ} under this coherent noise is determined by
\begin{equation}\label{Eq:noiseTcoh}
\begin{aligned}
\mathrm{Tr}(\mathcal{W}_{GHZ}\ket{\Psi_{\phi}}\bra{\Psi_{\phi}})=-\frac{\cos \phi} {2}<0
\end{aligned}
\end{equation}
which leads to $\phi\in (-\frac{\pi}{2}, \frac{\pi}{2})$. Thus if the absolute value of the phase $|\phi|\geq \frac{\pi}{2}$, the witness $\mathcal{W}_{GHZ}$ cannot properly detect the entanglement, while the prepared state $\ket{\Psi_{\phi}}$ is clearly an entangled one.

More generally, the realistic noise can be the combination of the white noise part $\Phi_p$ and the coherent part $U_z$, thus the output state shows,
\begin{equation}\label{Eq:Snoisy}
\begin{aligned}
\rho_{pre}&=\Phi_p \circ U_z(\ket{GHZ}\bra{GHZ})\\
&=(1-p)\ket{\Psi_{\phi}}\bra{\Psi_{\phi}}+p\frac{\mathbb{I}}{2^N},
\end{aligned}
\end{equation}
where $U_z(\cdot)=U_z\cdot U_z^{\dag}$, and note that $\Phi_p \circ U_z= U_z \circ \Phi_p$.

In this joint noise model, the noise tolerance range is determined by $\mathrm{Tr}(\mathcal{W}_{GHZ}\rho_{pre})<0$ with $\rho_{pre}$ in Eq.~\eqref{Eq:Snoisy}. The result is given by the following formula including the coherent and white noise parameters $\phi$ and $p$,
\begin{equation}\label{Eq:noiseTcom}
\begin{aligned}
\cos \phi >\frac{p}{1-p},
\end{aligned}
\end{equation}
as $N \rightarrow \infty$. The detailed derivation is left in Appendix \ref{Sec:noiseT}.
Comparing to Eq.~\eqref{Eq:noiseTcoh}, Eq.~\eqref{Eq:noiseTcom} shows that the range of $\phi$ shrinks due to the introduction of white noise. On the other hand, Eq.~\eqref{Eq:noiseTcom} can be rewritten as follows,
\begin{equation}\label{Eq:noiseTcomR}
\begin{aligned}
p<1-\frac{1}{\cos\phi+1}.
\end{aligned}
\end{equation}
It indicates that the range of the white noise parameter $p$ also decreases on account of the coherent noise, comparing to Eq.~\eqref{Eq:WhTol}.

\subsection{Model 2: Rotation on the first control qubit}\label{Ssec:rotation}
The GHZ state is normarlly generated by the following circuit routine, as shown in Fig.~\ref{Fig:GHZcir}.
\begin{itemize}
  \item Initialize all the $N$ qubits to be $\ket{0}$.
  \item Apply a Hardmard gate $H$ on the first (control) qubit, and transform it to $\ket{+}=\frac1{2}(\ket{0}+\ket{1})$.
  \item Apply Controled-NOT (CNOT) gate on qubit pairs $(1,2), (2,3), (3,4),\cdots(j,j+1)\cdots\cdots(N-1,N)$ in sequence, where $j$ is the control qubit and $j+1$ is the target qubit.
\end{itemize}

It is clear to see that the CNOT gate sequence spreads the superposition information of the first qubit to all the qubits, and thus builds the quantum correlation on the whole system.
Hence, the quality of the rotation on the first qubit significantly affects the preparation of the final GHZ state.
\begin{figure}[thb]
\centering
\resizebox{7cm}{!}{\includegraphics[scale=1]{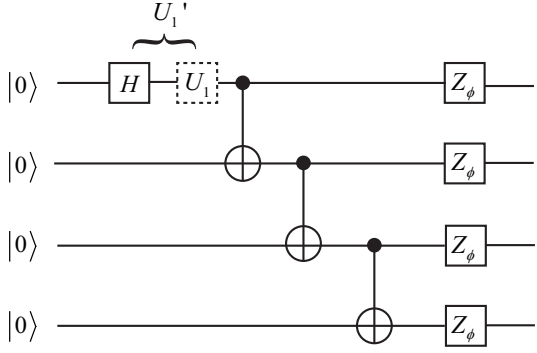}}
\caption{The quantum circuit to generate the 4-qubit GHZ state and the coherent noise on the 1st control qubit. The overall noisy unitary is denoted by $U_1'=U_1H_1$. Note that at the end of the circuit, we also allow the Z-basis phase accumulation $Z_{\phi}$ where $\phi$ needs not to be the same for each qubit.}
\label{Fig:GHZcir}
\end{figure}

Suppose besides the ideal $H$ gate, there is also another uncontrolled unitary on the first qubit, i.e.,
\begin{equation}\label{Eq:singleU}
\begin{aligned}
\ket{\psi}_1=U_1H_1\ket{0}_1=\cos\theta\ket{0}_1 + e^{i\phi}\sin\theta\ket{1}_1,
\end{aligned}
\end{equation}
with $\theta\in[0,\frac{\pi}{2}]$ and $\phi\in[-\pi,\pi)$.
Here the overall unitary $U_1'=U_1H_1$ in principle can be any single qubit unitary, thus $\ket{\psi}_1$ describes any single qubit state after ignoring the irrelevant global phase. In addition, we also allow the
unconscious phase accumulation on the state at the final stage.

Consequently, the final prepared state shows,
\begin{equation}\label{Eq:NoiseR}
\begin{aligned}
\ket{\Psi_{\phi}^{\theta}}=\cos\theta\ket{0}^{\otimes N}+e^{i\phi}\sin\theta \ket{1}^{\otimes N},
\end{aligned}
\end{equation}
where the accumulated phase at the final stage is also dropped into the parameter $\phi$ without confusion. Note that the noisy state $\ket{\Psi_{\phi}^{\theta}}$ in Eq.~\eqref{Eq:NoiseR} is more general than $\ket{\Psi_{\phi}}$ in Eq.~\eqref{Eq:stateCN}, since $\ket{\Psi_{\phi}^{\theta}}$ allows unbalanced state coefficients besides the relative phase.

The noise tolerance of $\mathcal{W}_{GHZ}$ in Eq.~\eqref{Eq:witnessGHZ} under this coherent noise is determined by
\begin{equation}\label{Eq:noiseTcoh1}
\begin{aligned}
\mathrm{Tr}(\mathcal{W}_{GHZ}\ket{\Psi_{\phi}^{\theta}}\bra{\Psi_{\phi}^{\theta}})=\frac1{2}-\frac1{2}[1+\sin(2\theta)\cos(\phi)]<0
\end{aligned}
\end{equation}
that is, $\sin(2\theta)\cos(\phi)>0$, which leads to $\phi\in (-\frac{\pi}{2}, \frac{\pi}{2})$.

As in Sec. \ref{Ssec:phase}, one can also consider the combination of the coherent noise and the white noise, and the final state shows,
\begin{equation}\label{Eq:Snoisy2}
\begin{aligned}
\rho_{pre}=(1-p)\ket{\Psi_{\phi}^{\theta}}\bra{\Psi_{\phi}^{\theta}}+p\frac{\mathbb{I}}{2^N},
\end{aligned}
\end{equation}

Accordingly, the tolerance of $\mathcal{W}_{GHZ}$ in this scenario when $N \rightarrow \infty$ shows,
\begin{equation}\label{Eq:noiseTcom2}
\begin{aligned}
\sin(2\theta)\cos(\phi)> \frac{p}{1-p}.
\end{aligned}
\end{equation}
The detailed derivation is left in Appendix \ref{Sec:noiseT}. Comparing to Eq.~\eqref{Eq:noiseTcom}, one can see that the noise tolerance range decreases further, after the introduction of the noise parameter $\theta$. In addition, Eq.~\eqref{Eq:noiseTcom2} can be rewritten as,
\begin{equation}\label{Eq:noiseTcomR2}
\begin{aligned}
p<1-\frac{1}{\sin(2\theta)\cos(\phi)+1},
\end{aligned}
\end{equation}
and it is worse than Eq.~\eqref{Eq:WhTol} and \eqref{Eq:noiseTcomR}. See Fig.~\ref{Fig:WNGHZ} for an illustration.
\begin{figure}[thb]
\centering
\resizebox{9.5cm}{!}{\includegraphics[scale=1]{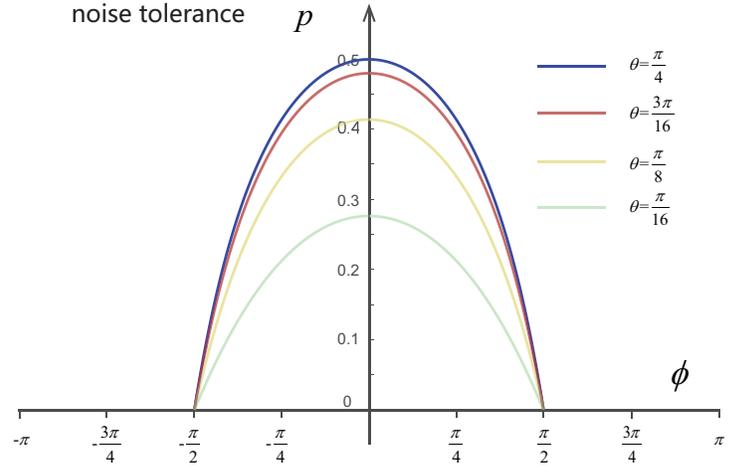}}
\caption{The decrease of the noise tolerance range after introducing coherent noises, as shown in Eq.~\eqref{Eq:noiseTcomR2}. We plot the white noise parameter $p$ as a function of $\phi$ for different $\theta$. The area under the curve labels the parameter region where the corresponding state can be detected by the witness $\mathcal{W}_{GHZ}$ in Eq.~\eqref{Eq:witnessGHZ}. For the top (blue) curve with $\theta=\frac{\pi}{4}$, one can see that $p$ decreases for larger coherent noisy parameter $\phi$. After further introducing the parameter $\theta$, the noise tolerance decreases further as $\theta$ departs from $\frac{\pi}{4}$.}
\label{Fig:WNGHZ}
\end{figure}
\section{Entanglement detection protocol under coherent noise}\label{Sec:EWfull}
As shown in Sec.~\ref{Sec:noiseM}, the witness $\mathcal{W}_{GHZ}$ specialized for the GHZ state potentially returns a null result when the prepared state suffers from some coherent noise. Here we propose an entanglement detection protocol that can eliminate the effect of coherent noises shown in the above section. The protocol only involves $N+2$ LMSs, which only adds one LMS comparing to the previous witness $\mathcal{W}_{GHZ}$ specialized for the GHZ state \cite{Guhne2007Toolbox}.

Since the resulting state of noise model 2 in Sec \ref{Ssec:rotation} is more general than that of model 1 in Sec \ref{Ssec:phase}, for clearness, in the following we first apply the entanglement detection protocol on the model 1, and then generalize it to the model 2.

\begin{figure}[thb]
\centering
\resizebox{6.5cm}{!}{\includegraphics[scale=1]{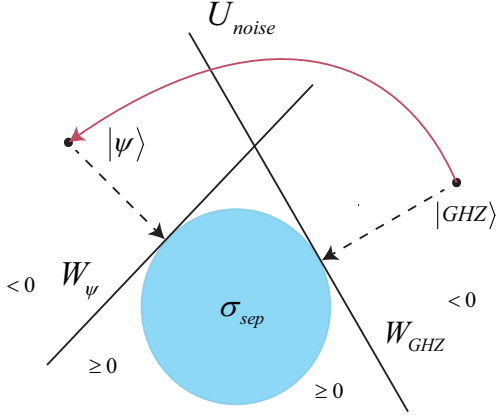}}
\caption{The illustration of the entanglement detection protocol. The blue disk labels the separable state set $\mathcal{S}_{sep}$, and the witness $\mathcal{W}_{GHZ}$ is represented by the right solid line. The length of this dotted arrowed line can reveal the white noise tolerance. The red curve, labeling the coherent noise, ``rotates" the GHZ state to some $\ket{\Psi}$. Here we effectively realize a family of witnesses $\mathcal{W}_{\Psi}$ as shown in Eq.~\eqref{Eq:witnessF} and \eqref{Eq:witnessF2}, and select the \emph{optimal} one by postprocessing the measurement results. Comparing to the situation in Fig.~\ref{Fig:CohNoise}, the protocol makes the entanglement detection possible again.}
\label{Fig:CohNoise2}
\end{figure}
\subsection{Detection protocol under noise model 1}\label{sSec:EWnm1}
The protocol measures the fidelity between $\rho_{pre}$ and $\ket{\Psi_{\phi}}$ in Eq.~\eqref{Eq:stateCN} for any phase parameter $\phi$ with the \emph{same} $N+2$ LMSs. As a result, one can effectively detect the  entanglement by choosing the \emph{optimal} witness in the family,
\begin{equation}\label{Eq:witnessF}
\begin{aligned}
\mathcal{W}_{\Psi_{\phi}}=\frac1{2} \mathbb{I}-\ket{\Psi_{\phi}}\bra{\Psi_{\phi}},
\end{aligned}
\end{equation}
by postprocessing the measurement results. See Fig.~\ref{Fig:CohNoise2} for an illustration. Hereafter qubit Pauli operators are denoted by $\{\sigma_x, \sigma_y, \sigma_z\}$, and we summarize the result into the following Theorem.
\begin{theorem}\label{Th:fullLM1}
The family of witnesses $\mathcal{W}_{\Psi_{\phi}}$ parameterized by $\phi$ in Eq.~\eqref{Eq:witnessF} can be realized with totally $N+2$ LMSs, i.e., $\sigma_z^{\otimes N}$ and
\begin{equation}\label{Eq:LMSN}
\begin{aligned}
 \mathcal{M}_{\theta_k}=\left(\cos \theta_k\sigma_x+ \sin \theta_k\sigma_y\right)^{\otimes N},
\end{aligned}
\end{equation}
where $\theta_k=\frac{k\pi}{N+1}$ and $k=0,1,\cdots,N$.
\end{theorem}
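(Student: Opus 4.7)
The plan is to split $\ket{\Psi_\phi}\bra{\Psi_\phi}$ into a diagonal part (the populations of $\ket{0}^{\otimes N}$ and $\ket{1}^{\otimes N}$) and an off-diagonal coherence between those two computational-basis states, and then show that each piece can be estimated from the specified $N+2$ LMSs without adapting the measurement basis to $\phi$. The two populations drop out immediately from a single $\sigma_z^{\otimes N}$ setting, whose outcome histogram yields $\bra{0}^{\otimes N}\rho\ket{0}^{\otimes N}$ and $\bra{1}^{\otimes N}\rho\ket{1}^{\otimes N}$ as relative frequencies, consuming one of the $N+2$ LMSs and contributing nothing that depends on $\phi$.

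For the coherence I would work directly with the $N+1$ equatorial settings. The useful starting identity is $\cos\theta\,\sigma_x+\sin\theta\,\sigma_y = e^{-i\theta}\ket{0}\bra{1}+e^{i\theta}\ket{1}\bra{0}$, which on tensoring gives
\begin{equation}
\mathcal{M}_{\theta_k} \;=\; \sum_{a\in\{0,1\}^N} e^{i q(a)\theta_k}\,\ket{a}\bra{\bar a} \;=\; \sum_{q} e^{iq\theta_k} R_q ,
\end{equation}
where $q(a)=n_1(a)-n_0(a)$, $\bar a$ is the bit-flip of $a$, and $R_q$ collects all flipped-basis dyads sharing the same $q$. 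Only the extreme sectors $R_{N}=\ket{1}^{\otimes N}\bra{0}^{\otimes N}$ and $R_{-N}=\ket{0}^{\otimes N}\bra{1}^{\otimes N}$ carry the coherence that appears in $\ket{\Psi_\phi}\bra{\Psi_\phi}$; all intermediate sectors with $|q|<N$ must be purged by postprocessing.

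The technical core is a character-sum identity that justifies the specific grid $\theta_k=k\pi/(N+1)$. A direct geometric-series computation gives $\sum_{k=0}^{N}e^{ia\theta_k}=N+1$ when $a\equiv 0\pmod{2(N+1)}$ and $0$ for any other even $a$. Because $q$ and $N$ share parity, the relevant exponents $a=q\pm N$ are always even, and the bound $|q\pm N|\le 2N<2(N+1)$ forces the unique surviving case to be $q=\mp N$. Applying this to the weighted sum $\sum_{k=0}^{N}\cos(N\theta_k-\phi)\,\mathcal{M}_{\theta_k}$ annihilates every $R_q$ with $|q|<N$ and leaves
\begin{equation}
\frac{1}{N+1}\sum_{k=0}^{N}\cos(N\theta_k-\phi)\,\mathcal{M}_{\theta_k} \;=\; \tfrac{1}{2}\bigl(e^{-i\phi}R_{-N}+e^{i\phi}R_{N}\bigr),
\end{equation}
which is exactly the coherence in $\ket{\Psi_\phi}\bra{\Psi_\phi}$. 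Combined with the diagonal contribution from $\sigma_z^{\otimes N}$ this produces a closed-form postprocessing estimator of $\mathrm{Tr}(\mathcal{W}_{\Psi_\phi}\rho)$ from the same $N+2$ settings, valid for every $\phi$.

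The main obstacle, although elementary, is the bookkeeping of this character-sum step: one has to verify simultaneously that the $\phi$-dependent weights $\cos(N\theta_k-\phi)$ kill \emph{all} middle sectors while preserving both extreme sectors with the correct $e^{\pm i\phi}$ phases, and that this is what forces the choice $\theta_k=k\pi/(N+1)$ with exactly $N+1$ equally spaced angles (one less would leave an undetermined middle sector, and a misaligned grid would not enforce the vanishing character sums). Once that identity is in hand, the LMS count---one for $\sigma_z^{\otimes N}$ plus $N+1$ for $\{\mathcal{M}_{\theta_k}\}_{k=0}^{N}$---delivers the claimed total of $N+2$.
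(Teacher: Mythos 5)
Your proposal is correct and follows essentially the same route as the paper: decompose $\ket{\Psi_\phi}\bra{\Psi_\phi}$ into populations (from $\sigma_z^{\otimes N}$) plus the extreme off-diagonal coherence, and extract the latter from the $N+1$ equally spaced equatorial settings via a discrete Fourier / character-sum identity that kills all intermediate flipped-basis sectors. The only (cosmetic) difference is that you fold the $\phi$-dependence into a single set of weights $\cos(N\theta_k-\phi)$, whereas the paper first splits the coherence into $\cos\phi\,\mathcal{X}_+ + \sin\phi\,\mathcal{X}_-$ and decomposes $\mathcal{X}_\pm$ separately; expanding your weights with $N\theta_k = k\pi - \theta_k$ recovers exactly the paper's coefficients $(-1)^k\cos\theta_k$ and $-(-1)^k\sin\theta_k$.
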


\begin{proof}
The projector $\ket{\Psi_{\phi}}\bra{\Psi_{\phi}}$ can be written as,
\begin{equation}\label{Eq:DnoiseS}
\begin{aligned}
\ket{\Psi_{\phi}}\bra{\Psi_{\phi}}=\mathcal{Z}+\mathcal{X},
\end{aligned}
\end{equation}
where $\mathcal{Z}$ denotes the summation of diagonal terms, i.e.,
\begin{equation}\label{Eq:noiseD}
\begin{aligned}
\mathcal{Z}=&\frac1{2}(\ket{0}\bra{0}^{\otimes N}+\ket{1}\bra{1}^{\otimes N}),\\
\end{aligned}
\end{equation}
and $\mathcal{X}$ is for off-diagonal terms
\begin{equation}\label{Eq:noiseOff}
\begin{aligned}
\mathcal{X}=\cos\phi\mathcal{X}_+ +\sin\phi\mathcal{X}_-,\\
\end{aligned}
\end{equation}
where
\begin{equation}\label{Eq:noiseOff1}
\begin{aligned}
&\mathcal{X}_+=\frac{\ket{0}\bra{1}^{\otimes N}+\ket{1}\bra{0}^{\otimes N}}{2},\\
&\mathcal{X}_-=\frac{\ket{0}\bra{1}^{\otimes N}-\ket{1}\bra{0}^{\otimes N}}{2i}.\\
\end{aligned}
\end{equation}

The diagonal part $\mathcal{Z}$ can be measured with the LMS $\sigma_z^{\otimes N}$. The off-diagonal part
$\mathcal{X}_+$ and $\mathcal{X}_-$ involved in $\mathcal{X}$ can be further decomposed with LMSs $\mathcal{M}_{\theta_k}$ given in Eq.~\eqref{Eq:LMSN} as,
\begin{equation}\label{Eq:DecOff}
\begin{aligned}
&\mathcal{X}_+=\frac1{N+1}\sum_{k=0}^N (-1)^k \cos(\theta_k) \mathcal{M}_{\theta_k}\\
&\mathcal{X}_-=\frac{-1}{N+1}\sum_{k=0}^N (-1)^k \sin(\theta_k) \mathcal{M}_{\theta_k}.
\end{aligned}
\end{equation}
The proof of these decompositions is based on discrete Fourier transform, and we leave it in Appendix \ref{prf:thm1}.
\end{proof}

To eliminate the effect of the coherent noise due to the unconscious phase accumulation, one should maximize the fidelity between the prepared state $\rho_{pre}$ and all possible $\ket{\Psi_{\phi}}$ based on measurement results, that is,
\begin{equation}\label{Eq:Nmax}
\begin{aligned}
\max_{\phi}\bra{\Psi_{\phi}}\rho_{pre}\ket{\Psi_{\phi}}&=\langle \mathcal{Z}\rangle+\max_{\phi}\langle \mathcal{X}\rangle\\
&=\langle \mathcal{Z}\rangle + \max_{\phi}\left\{ \cos\phi \langle \mathcal{X}_+\rangle + \sin\phi \langle \mathcal{X}_-\rangle\right\}\\
&=\langle \mathcal{Z}\rangle+\sqrt {\langle \mathcal{X}_+\rangle ^2+ \langle \mathcal{X}_-\rangle^2},
\end{aligned}
\end{equation}
where $\langle\cdot\rangle$ denotes the expectation value of the corresponding operator on $\rho_{pre}$, and in the final line we apply the Cauchy-Schwarz inequality. Note that $\langle \mathcal{Z}\rangle$ and $\langle \mathcal{X}_{\pm}\rangle$ can be obtained from LMS $\sigma_z$ and $\{\mathcal{M}_{\theta_k}\}_{k=0}^N$, respectively. The optimal $\phi_{opt}$ to saturate the maximal value in Eq.~\eqref{Eq:Nmax} is determined by
\begin{equation}\label{Eq:OpPhi}
\begin{aligned}
\tan \phi_{opt}&=\frac{\langle \mathcal{X}_-\rangle}{\langle \mathcal{X}_+\rangle}\\
&=-\frac{\sum_{k=0}^N (-1)^k \sin(\theta_k) \langle\mathcal{M}_{\theta_k}\rangle}{\sum_{k=0}^N (-1)^k \cos(\theta_k) \langle\mathcal{M}_{\theta_k}\rangle}.
\end{aligned}
\end{equation}
where the second line is on account of Eq.~\eqref{Eq:DecOff}, and $(\cos\phi_{opt}, \sin\phi_{opt})$ is in the same quadrant with $(\langle \mathcal{X}_+\rangle, \langle \mathcal{X}_-\rangle)$.

For instance, for the noisy state shown in Eq.~\eqref{Eq:Snoisy}, one can effectively choose the corresponding witness in Eq.~\eqref{Eq:witnessF} to eliminate the effect of the coherent noise and detect the entanglement. Note that the parameter $\phi_{opt}$ is determined by the measurement results. It is clear that the noise tolerance now is the same as in the sole white noise case, $p_{max}$ in Eq.~\eqref{Eq:WhTol}, no matter what value $\phi$ is.

Moreover, this protocol can further help to improve the experiment system. That is, one can apply an reverse unitary to amend the system according to the optimal $\phi_{opt}$ abstracted from the measurement results. In particular, one can add a corresponding Z-basis rotation on any qubit to eliminate the error.

\subsection{Detection protocol under noise model 2}\label{sSec:EWnm2}
In this section, we generalize the entanglement detection protocol proposed in Sec.~\ref{sSec:EWnm1} and apply it to the noise model 2.

The main strategy is similar, and here we realize the following family of witnesses with the same $N+2$ LMSs.
\begin{equation}\label{Eq:witnessF2}
\begin{aligned}
\mathcal{W}_{\Psi_{\phi}^{\theta}}=\max\{\cos^2\theta, \sin^2\theta\} \mathbb{I}-\ket{\Psi_{\phi}^{\theta}}\bra{\Psi_{\phi}^{\theta}},
\end{aligned}
\end{equation}
where $\max\{\cos^2\theta, \sin^2\theta\}$ is the maximal Schmidt coefficient of $\ket{\Psi_{\phi}^{\theta}}$ defined in Eq.~\eqref{Eq:NoiseR}. One can further choose the \emph{optimal} witness in the family by post-processing the measurement results. We summarize this into the following Theorem.
\begin{theorem}\label{Th:fullLM2}
The family of witnesses $\mathcal{W}_{\Psi_{\phi}^{\theta}}$ parameterized by $\phi$ and $\theta$ in Eq.~\eqref{Eq:witnessF2} can be realized with totally $N+2$ LMSs, i.e., $\sigma_z^{\otimes N}$ and $\{\mathcal{M}_{\theta_k}\}_{k=0}^N$ defined in Eq.~\eqref{Eq:LMSN}.
\end{theorem}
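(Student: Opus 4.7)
The plan is to reduce Theorem~\ref{Th:fullLM2} to Theorem~\ref{Th:fullLM1} by exhibiting a decomposition of the projector $\ket{\Psi_{\phi}^{\theta}}\bra{\Psi_{\phi}^{\theta}}$ in which the $\theta$-dependence is absorbed into classical coefficients in front of operators that we already know how to measure. Concretely, I would first expand
\begin{equation*}
\ket{\Psi_{\phi}^{\theta}}\bra{\Psi_{\phi}^{\theta}}=\mathcal{Z}_{\theta}+\mathcal{X}_{\theta,\phi},
\end{equation*}
where
\begin{equation*}
\mathcal{Z}_{\theta}=\cos^2\theta\,\ket{0}\bra{0}^{\otimes N}+\sin^2\theta\,\ket{1}\bra{1}^{\otimes N},
\end{equation*}
and the off-diagonal piece equals $\cos\theta\sin\theta\,(e^{-i\phi}\ket{0}\bra{1}^{\otimes N}+e^{i\phi}\ket{1}\bra{0}^{\otimes N})$. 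Using the definitions of $\mathcal{X}_{\pm}$ from Eq.~\eqref{Eq:noiseOff1}, a short algebraic rearrangement (splitting $e^{\pm i\phi}$ into real and imaginary parts) will give
\begin{equation*}
\mathcal{X}_{\theta,\phi}=\sin(2\theta)\bigl(\cos\phi\,\mathcal{X}_{+}+\sin\phi\,\mathcal{X}_{-}\bigr).
\end{equation*}

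Next I would argue measurability. The two diagonal operators $\ket{0}\bra{0}^{\otimes N}$ and $\ket{1}\bra{1}^{\otimes N}$ are recovered from the click statistics of the single LMS $\sigma_z^{\otimes N}$ (they correspond to the all-zero and all-one outcome projectors), so $\langle\mathcal{Z}_{\theta}\rangle$ is available for every $\theta$ from that one setting. For the off-diagonal part, Theorem~\ref{Th:fullLM1} already established via a discrete Fourier identity that $\mathcal{X}_{\pm}$ admit the decomposition in Eq.~\eqref{Eq:DecOff} in terms of the $N+1$ settings $\{\mathcal{M}_{\theta_k}\}_{k=0}^{N}$; this decomposition is completely independent of the state being targeted, so the same data $\{\langle \mathcal{M}_{\theta_k}\rangle\}$ yield $\langle\mathcal{X}_{\pm}\rangle$, hence $\langle\mathcal{X}_{\theta,\phi}\rangle$ for any choice of $(\theta,\phi)$. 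Adding the two blocks, the $N+2$ LMSs $\sigma_z^{\otimes N}\cup\{\mathcal{M}_{\theta_k}\}_{k=0}^{N}$ suffice to evaluate $\mathrm{Tr}(\mathcal{W}_{\Psi_{\phi}^{\theta}}\rho_{pre})$ for every $(\phi,\theta)$, which is the claim.

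To make the ``optimal witness in the family'' interpretation explicit, I would then describe how the postprocessing uses these expectation values: one first maximizes over $\phi$ exactly as in Eq.~\eqref{Eq:Nmax}, obtaining $\sqrt{\langle\mathcal{X}_{+}\rangle^{2}+\langle\mathcal{X}_{-}\rangle^{2}}$ with the optimal $\phi_{opt}$ given by Eq.~\eqref{Eq:OpPhi}; then one performs a one-dimensional optimization over $\theta\in[0,\pi/2]$ of
\begin{equation*}
\max\{\cos^2\theta,\sin^2\theta\}-\cos^2\theta\,\langle\ket{0}\bra{0}^{\otimes N}\rangle-\sin^2\theta\,\langle\ket{1}\bra{1}^{\otimes N}\rangle-\sin(2\theta)\sqrt{\langle\mathcal{X}_{+}\rangle^{2}+\langle\mathcal{X}_{-}\rangle^{2}},
\end{equation*}
which is a scalar minimization not requiring any new measurements.

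The only real obstacle is the bookkeeping of the phase in the off-diagonal terms — verifying the factor $\sin(2\theta)$ and the correct pairing with $\mathcal{X}_{+}$ versus $\mathcal{X}_{-}$. Once that decomposition is fixed, everything else is inherited from Theorem~\ref{Th:fullLM1}; the Fourier identity of Appendix~\ref{prf:thm1} does not need to be revisited. Thus the proof amounts to checking the algebraic identity for $\mathcal{X}_{\theta,\phi}$ and citing the previous theorem.
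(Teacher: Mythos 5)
Your proposal is correct and follows essentially the same route as the paper: decompose $\ket{\Psi_{\phi}^{\theta}}\bra{\Psi_{\phi}^{\theta}}=\cos^2\theta\,\mathcal{Z}_0+\sin^2\theta\,\mathcal{Z}_1+\sin(2\theta)\bigl(\cos\phi\,\mathcal{X}_+ +\sin\phi\,\mathcal{X}_-\bigr)$, read off the diagonal projectors from $\sigma_z^{\otimes N}$, and reuse the state-independent Fourier decomposition of $\mathcal{X}_{\pm}$ from Theorem~\ref{Th:fullLM1}. The only cosmetic difference is in the postprocessing description (you minimize the full witness value including the $\max\{\cos^2\theta,\sin^2\theta\}$ offset, whereas the paper maximizes the fidelity alone and obtains a closed form via Cauchy--Schwarz), which does not affect the theorem itself.
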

\begin{proof}
As in Eq.~\eqref{Eq:DnoiseS}, the projector $\ket{\Psi_{\phi}^{\theta}}\bra{\Psi_{\phi}^{\theta}}$ can be decomposed as follows,
\begin{equation}\label{Eq:DnoiseS1}
\begin{aligned}
\ket{\Psi_{\phi}^{\theta}}\bra{\Psi_{\phi}^{\theta}}=\cos^2\theta\mathcal{Z}_0 + \sin^2\theta\mathcal{Z}_1 +\sin (2\theta) \mathcal{X},
\end{aligned}
\end{equation}
where $\mathcal{Z}_0$ and $\mathcal{Z}_1$ denote $\ket{0}\bra{0}^{\otimes N}$ and $\ket{1}\bra{1}^{\otimes N}$, whose expectation values can be evaluated from the LMS $\sigma_z^{\otimes N}$; $\mathcal{X}$ is given by Eq.~\eqref{Eq:noiseOff} and \eqref{Eq:noiseOff1}, whose expectation value can be obtained from LMSs $\{\mathcal{M}_{\theta_k}\}_{k=0}^N$, as shown in Eq.~\eqref{Eq:DecOff}.
\end{proof}


Similar as Sec.~\ref{sSec:EWnm1}, we should find the maximal fidelity between the prepared state and all possible $\ket{\Psi_{\phi}^{\theta}}$ based on the measurement results,
\begin{equation}\label{Eq:Nmax2}
\begin{aligned}
&\max_{\phi,\theta}\bra{\Psi_{\phi}^{\theta}}\rho_{pre}\ket{\Psi_{\phi}^{\theta}}=\max_{\phi,\theta}\Big{\{}\cos^2\theta \langle\mathcal{Z}_{0} \rangle+ \sin^2\theta \langle\mathcal{Z}_{1}\rangle + \\
&\ \ \ \ \ \ \ \ \ \ \sin (2\theta) ( \cos\phi \langle \mathcal{X}_+\rangle + \sin\phi \langle \mathcal{X}_-\rangle)\Big{\}}\\
&=\max_{\theta}\left\{\cos^2\theta \langle\mathcal{Z}_{0} \rangle+ \sin^2\theta \langle\mathcal{Z}_{1}\rangle + \sin (2\theta) \sqrt {\langle \mathcal{X}_+\rangle ^2+ \langle \mathcal{X}_-\rangle^2}\right\}\\
&=\max_{\theta}\Big{\{}\frac{\langle\mathcal{Z}_{0}\rangle+\langle\mathcal{Z}_{1}\rangle}{2}+ \cos(2\theta) \frac{\langle\mathcal{Z}_{0}\rangle-\langle\mathcal{Z}_{1}\rangle}{2} + \\
&\ \ \ \ \ \ \ \ \ \ \sin (2\theta) \sqrt {\langle \mathcal{X}_+\rangle ^2+ \langle \mathcal{X}_-\rangle^2}\Big{\}}\\
&=\frac{\langle\mathcal{Z}_{0}\rangle+\langle\mathcal{Z}_{1}\rangle}{2}+\sqrt{\frac1{4}(\langle\mathcal{Z}_{0}\rangle-\langle\mathcal{Z}_{1}\rangle)^2+\langle\mathcal{X}_+\rangle ^2+ \langle \mathcal{X}_-\rangle^2}.
\end{aligned}
\end{equation}
Here the maximization on the parameters $\phi$ and $\theta$ can be conducted independently. In the second line, we take the optimal $\phi_{opt}$ given by Eq.~\eqref{Eq:OpPhi}. The last line is due to the Cauchy-Schwarz inequality, and the optimal $\theta_{opt}$ takes the value,
\begin{equation}\label{Eq:OpTheta}
\begin{aligned}
\tan (2\theta_{opt})=\frac{2\sqrt {\langle \mathcal{X}_+\rangle ^2+ \langle \mathcal{X}_-\rangle^2}}{\langle\mathcal{Z}_{0}\rangle-\langle\mathcal{Z}_{1}\rangle},
\end{aligned}
\end{equation}
with $[\cos(2\theta_{opt}), \sin(2\theta_{opt})]$ being in the same quadrant with $\left[\langle\mathcal{Z}_{0}\rangle-\langle\mathcal{Z}_{1}\rangle, 2\sqrt {\langle \mathcal{X}_+\rangle ^2+ \langle \mathcal{X}_-\rangle^2}\right]$.
Then one can choose the optimal witness in the family of Eq.~\eqref{Eq:witnessF2} to detect the entanglement, based on the fidelity maximization in Eq.~\eqref{Eq:Nmax2} and the associated optimal parameters $\phi_{opt}$ and $\theta_{opt}$.

For instance, for the noisy state in Eq.~\eqref{Eq:Snoisy2}, it is not hard to see that the noise tolerance shows
\begin{equation}\label{Eq:Tol2}
\begin{aligned}
p<\min\{\cos^2\theta_{opt}, \sin^2\theta_{opt}\},
\end{aligned}
\end{equation}
with $N \rightarrow \infty$, no matter what value $\phi$ is. The detailed derivation is left in Appendix \ref{Sec:noiseT}. Note that the white noise tolerance is still a function of $\theta$, even if one can obtain its value by postprocessing. The reason is because the parameter $\theta$, not like $\phi$, indeed affects the entanglement.

On the other hand, one can also choose the optimal witness in the family of Eq.~\eqref{Eq:witnessF} in Sec.~\ref{sSec:EWnm1} on the noise model 2 here. Since the optimization in Eq.~\eqref{Eq:Nmax} can help to determine the corresponding noise parameter $\phi_{opt}$, the optimal witness shows,
\begin{equation}\label{}
\begin{aligned}
\mathcal{W}_{\Psi_{\phi_{opt}}}=\frac1{2} \mathbb{I}-\ket{\Psi_{\phi_{opt}}}\bra{\Psi_{\phi_{opt}}}.
\end{aligned}
\end{equation}
As a result, for the noisy state in Eq.~\eqref{Eq:Snoisy2}, the corresponding white noise tolerance reads (see Appendix \ref{Sec:noiseT} for the derivation),
\begin{equation}\label{Eq:NT1on2}
\begin{aligned}
p<1-\frac{1}{\sin(2\theta)+1}.
\end{aligned}
\end{equation}
which shows a clear advantage comparing to Eq.~\eqref{Eq:noiseTcomR2} with the original witness $\mathcal{W}_{GHZ}$. Note that the term $\cos(\phi)$ is eliminated due to the postprocessing. Surprisingly, the white noise tolerance in Eq.~\eqref{Eq:NT1on2} is better than the one in Eq.~\eqref{Eq:Tol2}, as illustrated in Fig.~\ref{Fig:WNCompare}. We give a detailed comparison in Appendix \ref{Sec:CompareN}. The reason for this phenomenon may be as follows.  By using the family of witnesses in Eq.~\eqref{Eq:witnessF2}, one maximizes the fidelity between $\ket{\Psi_{\phi}^{\theta}}$ and the prepared state. However, the corresponding fidelity bound in the witness for the separable state, i.e., $\max\{\cos^2\theta, \sin^2\theta\}$, becomes larger and harder to violate.

\begin{figure}[thb]
\centering
\resizebox{7cm}{!}{\includegraphics[scale=1]{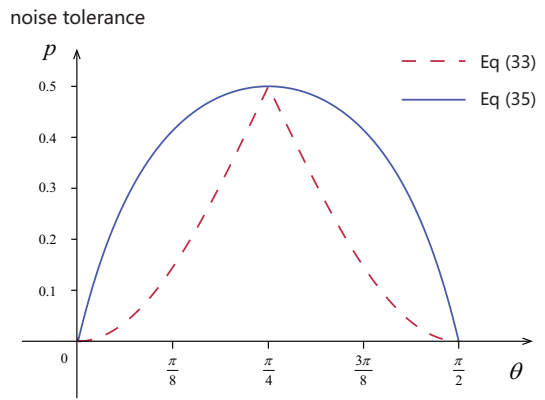}}
\caption{Comparison between the white noise tolerances in Eq.~\eqref{Eq:Tol2} and \eqref{Eq:NT1on2}. They are tolerances of the optimal witness in Eq.~\eqref{Eq:witnessF2} and Eq.~\eqref{Eq:witnessF} for the noisy state in Eq.~\eqref{Eq:Snoisy2}, respectively.  Both of them increase and reach the maximal value 0.5 at $\theta=\pi/4$, then decrease, since at this point the state coefficients are balanced and the state possesses the maximal entanglement.  Except $\theta=0,\pi/4, \pi/2$, Eq.~\eqref{Eq:NT1on2} always shows a clear advantage to Eq.~\eqref{Eq:Tol2}.}
\label{Fig:WNCompare}
\end{figure}


The entanglement detection protocol under the noise model 2 employs the same set of $N+2$ LMSs as that in Sec.~\ref{sSec:EWnm1}, but abstracts both noise parameters $\phi_{opt}$ and $\theta_{opt}$. This is because here we postprocess measurement results more delicately. Even though the white noise tolerance of the corresponding witness in Eq.~\eqref{Eq:witnessF2} is not better than the one in Eq.~\eqref{Eq:witnessF}, the experiment system can be further improved with the noise parameters $\phi_{opt}$ and $\theta_{opt}$ extracted from the measurement results. In particular, one can add an unitary $U_1^{\dag}$ on the first qubit when preparing the GHZ state, which can be determined by $\phi_{opt}$ and $\theta_{opt}$.

\section{Entanglement detection with less LMS}\label{Sec:EWeff}
In entanglement detection, the number of LMSs usually determines the efficiency of the witness, since even for one setting it could take thousands of measurements to obtain the accurate estimation of the expectation value. Thus, it is beneficial to reduce the number of LMSs and enhance the efficiency of the witness. In this section, by utilizing the stabilizer formulation, we show that one can detect entanglement of GHZ-like states with only $3$ LMSs under realistic coherent noises.

Comparing to the witness $\mathcal{W}_{GHZ}$ using $N+1$ LMSs in Eq.~\eqref{Eq:witnessGHZ},
there is a more efficient witness using $2$ LMSs \cite{Toth2005Detecting}, i.e.,
\begin{equation}\label{Eq:EWGHZtwo}
\begin{aligned}
\mathcal{W}^2_{GHZ}=\frac{1}{2} \mathbb{I}-\mathcal{Z}-\frac{1}{4}\sigma_x^{\otimes N},
\end{aligned}
\end{equation}
where $\mathcal{Z}$ is defined in Eq.~\eqref{Eq:noiseD}. However, there is a trade-off between the efficiency and the white noise tolerance \cite{Guhne2007Toolbox,Qi2019Efficient}. For the target GHZ state, the white noise tolerance of $\mathcal{W}^2_{GHZ}$ is $p=\frac{1}{3}$ as $N \rightarrow \infty$ \cite{Toth2005Detecting}, while the tolerance is $p=\frac{1}{2}$ of $\mathcal{W}_{GHZ}$. Note that $\mathcal{W}^2_{GHZ}$ employs two settings, i.e., $\sigma_z^{\otimes N}$ and $\sigma_x^{\otimes N}$.

In the following, we study the entanglement detection under coherent noises with less LMSs, by adding another LMS $\sigma_y\otimes \sigma_x ^{\otimes N-1}$ to $\sigma_z^{\otimes N}$ and $\sigma_x^{\otimes N}$. Actually, the $\sigma_y$ operator can be on any qubit due to the symmetry of GHZ-like states. With loss of generality, we set it on the first qubit. Similar as Sec.~\ref{Sec:EWfull}, we first consider the entanglement detection under noise model 1, then noise model 2 that contains more possible noisy states.
\subsection{Efficient detection under noise model 1}\label{sSec:EEWnm1}
We extend the witness $\mathcal{W}^2_{GHZ}$ in Eq.~\eqref{Eq:EWGHZtwo} to a family of witnesses, parameterized by the noisy parameter $\phi$. Similar as Sec.~\ref{sSec:EWnm1}, one can effectively detect entanglement under the coherent noise by choosing the \emph{optimal} witness in this family, by postprocessing the measurement results. We summarize the result in the following theorem.
\begin{theorem}\label{Th:2LM1}
The witness $\mathcal{W}_{\Psi_{\phi}}^2$ can detect entanglement near the state $\ket{\Psi_{\phi}}$,
\begin{equation}\label{Eq:witnessTwo}
\begin{aligned}
\mathcal{W}_{\Psi_{\phi}}^2=\frac{1}{2} \mathbb{I}-\mathcal{Z}-\frac{1}{4}(\cos \phi \mathcal{M}_x + \sin \phi \mathcal{M}'_x),
\end{aligned}
\end{equation}
where $\mathcal{Z}$ is defined in Eq.~\eqref{Eq:noiseD}, $\mathcal{M}_x=\sigma_x ^{\otimes N}$, and $\mathcal{M}_x'=\sigma_y\otimes \sigma_x ^{\otimes N-1}$.
\end{theorem}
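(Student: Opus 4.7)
The plan is to verify the two defining properties of $\mathcal{W}_{\Psi_\phi}^2$ as an entanglement witness: strict negativity on the target, $\bra{\Psi_\phi}\mathcal{W}_{\Psi_\phi}^2\ket{\Psi_\phi}<0$, and non-negativity on every separable state, $\mathrm{Tr}(\mathcal{W}_{\Psi_\phi}^2\sigma_s)\geq 0$. The first property is handled by direct computation: using $\ket{\Psi_\phi}=\tfrac{1}{\sqrt{2}}(\ket{0}^{\otimes N}+e^{i\phi}\ket{1}^{\otimes N})$ together with $\sigma_y\ket{0}=i\ket{1}$ and $\sigma_y\ket{1}=-i\ket{0}$, one obtains $\bra{\Psi_\phi}\mathcal{Z}\ket{\Psi_\phi}=\tfrac{1}{2}$, $\bra{\Psi_\phi}\mathcal{M}_x\ket{\Psi_\phi}=\cos\phi$ and $\bra{\Psi_\phi}\mathcal{M}_x'\ket{\Psi_\phi}=\sin\phi$, so that $\bra{\Psi_\phi}\mathcal{W}_{\Psi_\phi}^2\ket{\Psi_\phi}=\tfrac{1}{2}-\tfrac{1}{2}-\tfrac{1}{4}(\cos^2\phi+\sin^2\phi)=-\tfrac{1}{4}$.

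The main step is the separable bound. My plan is to locally unitarily reduce $\mathcal{W}_{\Psi_\phi}^2$ to the two-setting GHZ witness $\mathcal{W}_{GHZ}^2$ of Eq.~\eqref{Eq:EWGHZtwo}, whose separable bound is already established in Ref.~\cite{Toth2005Detecting}. The algebraic observation is the factorization $\cos\phi\,\mathcal{M}_x+\sin\phi\,\mathcal{M}_x'=(\cos\phi\,\sigma_x+\sin\phi\,\sigma_y)\otimes\sigma_x^{\otimes N-1}$. Introducing the single-qubit rotation $V_\phi=e^{-i\phi\sigma_z/2}$ acting on the first qubit only, the standard Pauli rotation identity gives $V_\phi^\dagger\sigma_x V_\phi=\cos\phi\,\sigma_x+\sin\phi\,\sigma_y$, while $V_\phi$ commutes with $\mathcal{Z}$ since both are diagonal in the computational basis. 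Combining these, $V_\phi^\dagger\mathcal{W}_{\Psi_\phi}^2 V_\phi=\mathcal{W}_{GHZ}^2$. Because $V_\phi$ is a single-qubit unitary, it preserves the separable set, so for any separable $\sigma_s$ the state $\sigma_s':=V_\phi^\dagger\sigma_s V_\phi$ is also separable, and hence $\mathrm{Tr}(\mathcal{W}_{\Psi_\phi}^2\sigma_s)=\mathrm{Tr}(\mathcal{W}_{GHZ}^2\sigma_s')\geq 0$ by the validity of $\mathcal{W}_{GHZ}^2$.

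The only real obstacle is therefore the separable bound for $\mathcal{W}_{GHZ}^2$ that is imported from the literature; if a self-contained derivation were desired, one would expand a pure product state $\ket{\psi_s}=\bigotimes_j(a_j\ket{0}+b_je^{i\theta_j}\ket{1})$ with $a_j^2+b_j^2=1$, observe $\bra{\psi_s}\mathcal{Z}\ket{\psi_s}=\tfrac{1}{2}(A^2+B^2)$ with $A=\prod_j a_j$ and $B=\prod_j b_j$, and bound $|\bra{\psi_s}\mathcal{M}_\phi\ket{\psi_s}|\leq 2^N AB$ using $|\langle\psi_j|\sigma|\psi_j\rangle|\leq 2a_jb_j$ for any single-qubit Pauli $\sigma\in\{\sigma_x,\sigma_y,\cos\phi\,\sigma_x+\sin\phi\,\sigma_y\}$. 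The separable bound then reduces to the elementary inequality $A^2+B^2+2^{N-1}AB\leq 1$ under $a_j^2+b_j^2=1$, which is the bound of Ref.~\cite{Toth2005Detecting}.
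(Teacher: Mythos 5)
Your proof is correct and follows essentially the same route as the paper: both reduce $\mathcal{W}_{\Psi_\phi}^2$ to the two-setting GHZ witness $\mathcal{W}^2_{GHZ}$ by conjugation with a $Z$-rotation on the first qubit and invoke the fact that local unitaries preserve the separable set (the paper's Observation~\ref{Ob:LU}), importing the separable bound of $\mathcal{W}^2_{GHZ}$ from Ref.~\cite{Toth2005Detecting}. The only nit is a dagger placement: with $V_\phi=e^{-i\phi\sigma_z/2}$ the rotation identity reads $V_\phi\sigma_xV_\phi^{\dagger}=\cos\phi\,\sigma_x+\sin\phi\,\sigma_y$ rather than $V_\phi^{\dagger}\sigma_xV_\phi$, but your final statement $V_\phi^{\dagger}\mathcal{W}_{\Psi_\phi}^2V_\phi=\mathcal{W}^2_{GHZ}$ is then exactly right, so nothing in the argument breaks.
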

It is clear that the family of witness $\{\mathcal{W}_{\Psi_{\phi}}^2\}$ parameterized by $\phi$ can be realized by $3$ LMSs, $\sigma_z ^{\otimes N},\sigma_x ^{\otimes N},\sigma_y\otimes \sigma_x ^{\otimes N-1}$.
\begin{proof}
As given in Eq.~\eqref{Eq:stateCN}, the possible state under the coherent noise shows $\ket{\Psi_{\phi}}=\frac1{\sqrt{2}}(\ket{0}^{\otimes N}+e^{i\phi} \ket{1}^{\otimes N})$, and it can be transformed from the standard GHZ state by applying a single qubit unitary,
\begin{equation}\label{}
\begin{aligned}
\ket{\Psi_{\phi}}=U_1^z\otimes \mathbb{I}^{\otimes N-1} \ket{GHZ},
\end{aligned}
\end{equation}
where $U_1^z=\ket{0}_1\bra{0}+e^{i\phi}\ket{1}_1\bra{1}=e^{i\phi/2}e^{-i\sigma_z\phi/2}$. For the density matrix, one has the relation $\Psi_{\phi}=U_1^z GHZ U_1^{z\dag}$. Since the witness $\mathcal{W}^2_{GHZ}$ can detect entanglement near the GHZ state, we have the following witness that can detect entanglement near $\ket{\Psi_{\phi}}$ based on Observation \ref{Ob:LU},
\begin{equation}\label{}
\begin{aligned}
\mathcal{W}_2^{\phi}&=U_1^z \mathcal{W}^2_{GHZ} U_1^{z\dag}\\
&=\frac{1}{2}\mathbb{I}-\mathcal{Z}-\frac{1}{4}(e^{-i\sigma_z\phi/2}\sigma_xe^{i\sigma_z\phi/2})\otimes\sigma_x^{\otimes N-1},\\
&=\frac{1}{2}\mathbb{I}-\mathcal{Z}-\frac{1}{4}(\cos \phi \sigma_x + \sin \phi \sigma_y)\otimes\sigma_x^{\otimes N-1}.
\end{aligned}
\end{equation}
In this way, we obtain a family of entanglement witnesses parameterized by the phase $\phi$, and they all need $2$ LMSs, i.e. $\sigma_z^{\otimes N}$ and $(\cos \phi \sigma_x - \sin \phi \sigma_y)\otimes\sigma_x^{\otimes N-1}$. In fact, one can realize these witnesses with only $3$ LMSs, $\mathcal{M}_z=\sigma_z^{\otimes N}$, $\mathcal{M}_x=\sigma_x^{\otimes N}$and $\mathcal{M}_x'=\sigma_y\otimes\sigma_x^{\otimes N-1}$, since the result of $(\cos \phi \sigma_x - \sin \phi \sigma_y)\otimes\sigma_x^{\otimes N-1}$ can be obtained by the linear combination of $\mathcal{M}_x$ and $\mathcal{M}_x'$.
\end{proof}

\begin{observation}\label{Ob:LU}
Suppose an entanglement witness $\mathcal{W}$ can detect a entangled state $\rho$, i.e., $\mathrm{Tr}(\mathcal{W}\rho)<0$ and $\mathrm{Tr}(\mathcal{W}\sigma_s)\geq0,\ \forall\sigma_s \in S_{sep}$, the state $\rho'=U_{loc}\rho U_{loc}^{\dag}$ after the transformation with local unitary operation $U_{loc}$ can be detected by the corresponding witness
\begin{equation}\label{}
\begin{aligned}
\mathcal{W}'=U_{loc}\mathcal{W}U_{loc}^{\dag}
\end{aligned}
\end{equation}
where $U_{loc}=U_1\otimes U_2\cdots U_N$ and $U_i$ is the unitary on the $i$-th qubit.
\end{observation}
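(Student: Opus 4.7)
The plan is to verify the two defining properties of an entanglement witness for the candidate operator $\mathcal{W}' = U_{loc}\mathcal{W}U_{loc}^{\dagger}$ applied to the state $\rho' = U_{loc}\rho U_{loc}^{\dagger}$: namely, (i) that $\mathrm{Tr}(\mathcal{W}'\rho') < 0$, so that $\mathcal{W}'$ actually detects $\rho'$, and (ii) that $\mathrm{Tr}(\mathcal{W}'\sigma_s) \geq 0$ for every separable state $\sigma_s \in \mathcal{S}_{sep}$, so that $\mathcal{W}'$ is indeed a valid witness.

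For part (i), I would invoke the cyclic invariance of the trace together with unitarity. Substituting the definitions gives $\mathrm{Tr}(\mathcal{W}'\rho') = \mathrm{Tr}(U_{loc}\mathcal{W}U_{loc}^{\dagger} U_{loc}\rho U_{loc}^{\dagger})$, and after cancelling $U_{loc}^{\dagger}U_{loc} = \mathbb{I}$ and cycling the remaining $U_{loc}$ factor to the right, this collapses directly to $\mathrm{Tr}(\mathcal{W}\rho)$, which is negative by assumption. This step is purely mechanical.

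For part (ii), the essential ingredient is the fact that local unitaries preserve separability. Any separable state admits a decomposition $\sigma_s = \sum_i p_i \, \rho_i^{(1)} \otimes \cdots \otimes \rho_i^{(N)}$, and the conjugation $U_{loc}^{\dagger}\sigma_s U_{loc} = \sum_i p_i \, (U_1^{\dagger}\rho_i^{(1)}U_1) \otimes \cdots \otimes (U_N^{\dagger}\rho_i^{(N)}U_N)$ is again a convex combination of product states, hence still in $\mathcal{S}_{sep}$. Using cyclicity again, $\mathrm{Tr}(\mathcal{W}'\sigma_s) = \mathrm{Tr}(\mathcal{W}\, U_{loc}^{\dagger}\sigma_s U_{loc})$, which is non-negative by the assumed witness property of $\mathcal{W}$ applied to the separable state $U_{loc}^{\dagger}\sigma_s U_{loc}$.

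In terms of difficulty, there is no real obstacle here: both parts reduce to trace-cyclicity and the elementary observation that $U_{loc}$ acts by tensor-factor on each site, so it cannot create or destroy the product structure underlying separability. The only item worth flagging explicitly in the write-up is the invariance of $\mathcal{S}_{sep}$ under local unitary conjugation, since the whole argument hinges on that closure property.
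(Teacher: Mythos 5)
Your proposal is correct and follows essentially the same route as the paper's own proof: trace cyclicity collapses $\mathrm{Tr}(\mathcal{W}'\rho')$ to $\mathrm{Tr}(\mathcal{W}\rho)<0$, and non-negativity on separable states follows because $U_{loc}^{\dagger}\sigma_s U_{loc}$ remains separable. Your explicit product-state decomposition just spells out the closure of $\mathcal{S}_{sep}$ under local unitary conjugation, which the paper asserts without elaboration.
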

\begin{proof}
Note that $\rho'$ is still an entangled state, as local unitary operations does not alter entanglement property. $\mathrm{Tr}(\mathcal{W}'\rho')=\mathrm{Tr}(U_{loc}\mathcal{W}U_{loc}^{\dag}U_{loc}\rho U_{loc}^{\dag})=\mathrm{Tr}(\mathcal{W}\rho)<0$. In addition, $\mathrm{Tr}(\mathcal{W}'\sigma_{sep})=\mathrm{Tr}(U_{loc}\mathcal{W}U_{loc}^{\dag}\sigma_{sep})=\mathrm{Tr}(\mathcal{W}U_{loc}^{\dag}\sigma_{sep}U_{loc})\geq0$,  since $U_{loc}^{\dag}\sigma_{sep}U_{loc}$ is still a separable state.
\end{proof}
To eliminate the effect of the coherent noise, similar as Eq.~\eqref{Eq:Nmax}, one should find the minimal expectation value of all the witnesses in the family, i.e., $\min_{\phi} \mathrm{Tr}(\mathcal{W}_{\Psi_{\phi}}^2\rho_{pre})$. Equivalently,
\begin{equation}\label{Eq:2min}
\begin{aligned}
\max_{\phi} \{\cos \phi \langle\mathcal{M}_x\rangle+\sin \phi\langle \mathcal{M}_x' \rangle\}.
\end{aligned}
\end{equation}
The optimal $\phi$ is determined by
\begin{equation}\label{Eq:optPhiTwo1}
\begin{aligned}
\tan \phi_{opt}=\frac{\langle\mathcal{M}_x'\rangle}{\langle\mathcal{M}_x\rangle}.
\end{aligned}
\end{equation}
and $(\cos\phi_{opt}, \sin\phi_{opt})$ is in the same quadrant with $(\langle\mathcal{M}_x\rangle, \langle\mathcal{M}'_x\rangle)$.

For instance, for the noisy state shown in Eq.~\eqref{Eq:Snoisy}, one can effectively choose the corresponding witness in Eq.~\eqref{Eq:witnessTwo} to detect entanglement. Now the noise tolerance is the same as in the sole white noise case, i.e., $p=\frac{1}{3}$ as $N \rightarrow \infty$, no matter what value $\phi$ is. In addition, this protocol can further help to improve the experiment system by applying an correcting unitary according to the optimal $\phi_{opt}$.
\subsection{Efficient detection under noise model 2}\label{sSec:EEWnm2}
We further extend the witness $\mathcal{W}^2_{GHZ}$ in Eq.~\eqref{Eq:EWGHZtwo} to a family of witnesses, and have the following theorem.
\begin{theorem}\label{Th:2LM2}
The witness $\mathcal{W}_{\Psi_{\phi}^{\theta}}^2$ can detect entanglement near the state $\ket{\Psi_{\phi}^{\theta}}$,
\begin{equation}\label{Eq:2LM2}
\begin{aligned}
\mathcal{W}_{\Psi_{\phi}^{\theta}}^2=&\frac{2\max\{\cos^2\theta, \sin^2\theta\}+1}{4} \mathbb{I}-\mathcal{Z}\\
&-\frac{1}{4}\cos (2\theta) \mathcal{M}_z
-\frac{1}{4}\sin (2\theta)(\cos \phi \mathcal{M}_x + \sin \phi \mathcal{M}'_x),
\end{aligned}
\end{equation}
where $\mathcal{Z}$ is defined in Eq.~\eqref{Eq:noiseD}, $\mathcal{M}_z=\sigma_z\otimes \mathbb{I}^{\otimes  N-1}$, $\mathcal{M}_x=\sigma_x ^{\otimes N}$, and $\mathcal{M}_x'=\sigma_y\otimes \sigma_x ^{\otimes N-1}$.
\end{theorem}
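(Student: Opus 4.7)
The plan is to prove Theorem~\ref{Th:2LM2} in three steps: (i) reduce the phase parameter $\phi$ to $\phi = 0$ by a local-unitary conjugation, (ii) verify the detection property on $\ket{\Psi_0^\theta}$, and (iii) establish the biseparability bound by reducing it to the T\'oth--G\"uhne biseparability bound for $\mathcal{W}^2_{GHZ}$.

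For step (i), I would note that $\ket{\Psi_\phi^\theta} = (U_1^z \otimes \mathbb{I}^{\otimes N-1})\ket{\Psi_0^\theta}$ with $U_1^z = e^{-i\sigma_z\phi/2}$ (up to a global phase). Conjugation by this unitary fixes $\mathcal{Z}$ and $\mathcal{M}_z$ (both commute with $\sigma_z^{(1)}$) and rotates $\sigma_x^{(1)} \mapsto \cos\phi\,\sigma_x^{(1)} + \sin\phi\,\sigma_y^{(1)}$, so $\mathcal{M}_x \mapsto \cos\phi\,\mathcal{M}_x + \sin\phi\,\mathcal{M}'_x$. Thus $\mathcal{W}^2_{\Psi_\phi^\theta}$ is the local-unitary conjugate of $\mathcal{W}^2_{\Psi_0^\theta}$, and Observation~\ref{Ob:LU} reduces the theorem to the case $\phi = 0$. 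Step (ii) is then an immediate substitution: $\langle\mathcal{Z}\rangle_{\Psi_0^\theta} = \tfrac{1}{2}$, $\langle\mathcal{M}_z\rangle_{\Psi_0^\theta} = \cos(2\theta)$, and $\langle\mathcal{M}_x\rangle_{\Psi_0^\theta} = \sin(2\theta)$ give $\mathrm{Tr}(\mathcal{W}^2_{\Psi_0^\theta}\ket{\Psi_0^\theta}\bra{\Psi_0^\theta}) = -\tfrac{1}{2}\min\{\cos^2\theta,\sin^2\theta\}$, strictly negative for $\theta\in(0,\pi/2)$.

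The core of the proof is step (iii). Using $|\langle\mathcal{M}_z\rangle|\le 1$ to bound $-\tfrac{\cos(2\theta)}{4}\langle\mathcal{M}_z\rangle \ge -\tfrac{|\cos(2\theta)|}{4}$, together with the algebraic identity $2\max\{\cos^2\theta,\sin^2\theta\} = 1 + |\cos(2\theta)|$, one obtains
\begin{equation*}
\mathrm{Tr}(\mathcal{W}^2_{\Psi_0^\theta}\sigma_{bs}) \;\ge\; \tfrac{1}{2} - \bigl(\langle\mathcal{Z}\rangle + \tfrac{\sin(2\theta)}{4}\langle\mathcal{M}_x\rangle\bigr).
\end{equation*}
To bound the bracketed quantity, I would invoke two valid entanglement witnesses: the original $\mathcal{W}^2_{GHZ} = \tfrac{1}{2}\mathbb{I} - \mathcal{Z} - \tfrac{1}{4}\mathcal{M}_x$ from Eq.~\eqref{Eq:EWGHZtwo}, and its sign-flipped counterpart $\sigma_z^{(1)}\mathcal{W}^2_{GHZ}\sigma_z^{(1)} = \tfrac{1}{2}\mathbb{I} - \mathcal{Z} + \tfrac{1}{4}\mathcal{M}_x$, which is valid by Observation~\ref{Ob:LU} since $\sigma_z^{(1)}$ is a local unitary (it anticommutes with $\sigma_x^{(1)}$ while fixing $\mathcal{Z}$). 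Both yield $\langle\mathcal{Z}\rangle \pm \tfrac{1}{4}\langle\mathcal{M}_x\rangle \le \tfrac{1}{2}$ on biseparable states, and convex interpolation gives $\langle\mathcal{Z}\rangle + \tfrac{c}{4}\langle\mathcal{M}_x\rangle \le \tfrac{1}{2}$ for every $c\in[-1,1]$. Specializing to $c = \sin(2\theta)\in[0,1]$ completes the proof.

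The main technical insight is the cancellation $2\max\{\cos^2\theta,\sin^2\theta\} - |\cos(2\theta)| = 1$, which lets the $\theta$-dependent constant in $\mathcal{W}^2_{\Psi_0^\theta}$ absorb the new $\mathcal{M}_z$ contribution exactly and reduces the biseparability check to the sign-symmetrized T\'oth--G\"uhne bound. Once this cancellation is spotted, the argument is essentially mechanical; the only subtlety to be careful about is that $\sin(2\theta)\ge 0$ throughout the allowed range $\theta\in[0,\pi/2]$, so that $c=\sin(2\theta)$ indeed lies in the interpolation interval $[-1,1]$.
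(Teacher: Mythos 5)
Your proof is correct, and it takes a genuinely different route from the paper. The paper first notes that $\ket{\Psi_{\phi}^{\theta}}$ is \emph{not} locally equivalent to the GHZ state and therefore abandons the conjugation strategy of Theorem~\ref{Th:2LM1} entirely: it constructs a generalized (non-Pauli) stabilizer $S_1'$ of $\ket{\Psi_{\phi}^{\theta}}$ by pushing $\sigma_z^1$ through the noisy preparation circuit, forms the projectors $P_1'=\tfrac{1}{2}(\mathbb{I}+S_1')$ and $P_2'=2\mathcal{Z}$, and uses the operator inequality $(\mathbb{I}-P_1')(\mathbb{I}-P_2')\geq 0$ to show $2\mathcal{W}^2_{\Psi_{\phi}^{\theta}}\geq \mathcal{W}_{\Psi_{\phi}^{\theta}}$, inheriting validity from the fidelity witness of Eq.~\eqref{Eq:witnessF2}. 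You instead observe that the obstruction to local equivalence concerns only $\theta$, so the $\phi$-dependence can still be stripped off by conjugating with $e^{-i\sigma_z\phi/2}$ on the first qubit (exactly as in Theorem~\ref{Th:2LM1}); you then handle the residual $\theta$-dependence by discarding the $\mathcal{M}_z$ term via $|\langle\mathcal{M}_z\rangle|\leq 1$, exploiting the identity $2\max\{\cos^2\theta,\sin^2\theta\}=1+|\cos(2\theta)|$ to absorb the loss into the constant, and convexly interpolating between $\mathcal{W}^2_{GHZ}$ and its $\sigma_z^{(1)}$-conjugate. Your argument is more elementary and self-contained modulo the validity of $\mathcal{W}^2_{GHZ}$ (which it imports as a black box, whereas the paper's projector machinery reproves it along the way); the paper's stabilizer construction is heavier but yields the relation $2\mathcal{W}^2_{\Psi_{\phi}^{\theta}}\geq\mathcal{W}_{\Psi_{\phi}^{\theta}}$ and the projector form Eq.~\eqref{Eq:2W2Proj} as byproducts, which the appendix then reuses to compute the white-noise tolerance Eq.~\eqref{Eq:twoTol2}. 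Both arguments are sound; your explicit check that the witness evaluates to $-\tfrac{1}{2}\min\{\cos^2\theta,\sin^2\theta\}$ on the target state is a detail the paper leaves implicit.
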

It is clear that in general $\ket{\Psi_{\phi}^{\theta}}$ can not be transformed from the standard GHZ state by local unitary operations.
Thus, we cannot prove Theorem \ref{Th:2LM2} with the approach used in Theorem \ref{Th:2LM1}, and the following proof is based on the generalized stabilizer formula.
\begin{proof}
The GHZ state is a stabilizer state that is uniquely determined by the following $N$ independent stabilizer operators,
\begin{equation}\label{Eq:StaGHZ}
\begin{aligned}
S_1=\sigma_x ^{\otimes N},S_{2}=\sigma_z^1\sigma_z^2, S_{3}=\sigma_z^2\sigma_z^3, \cdots, S_{N}=\sigma_z^{N-1}\sigma_z^N
\end{aligned}
\end{equation}
and the witness $\mathcal{W}^2_{GHZ}$ in Eq.~\eqref{Eq:EWGHZtwo} can be equivalently written as,
\begin{equation}\label{Eq:GHZstab}
\begin{aligned}
2\mathcal{W}^2_{GHZ}=\frac{3}{2} \mathbb{I}-(P_1+P_2)
\end{aligned}
\end{equation}
with $P_1$ and $P_2$ two projectors determined by the stabilizers,
\begin{equation}\label{Eq:ProGHZ}
\begin{aligned}
P_1&=\frac{\mathbb{I}+S_1}{2}=\frac{\mathbb{I}+\sigma_x ^{\otimes N}}{2},\\
P_2&=\prod_{i=2}^{N}\frac{\mathbb{I}+S_i}{2}\equiv2\mathcal{Z}.
\end{aligned}
\end{equation}

Due to the fact $(\mathbb{I}-P_1)(\mathbb{I}-P_2)\geq 0$, one has
\begin{equation}\label{Eq:GHZproj}
\begin{aligned}
\ket{GHZ}\bra{GHZ}=P_1P_2\geq (P_1+P_2-\mathbb{I}),
\end{aligned}
\end{equation}
As a result, the witness $\mathcal{W}_{GHZ}^2$ is valid since $2\mathcal{W}^2_{GHZ} \geq \mathcal{W}_{GHZ}$ given in Eq.~\eqref{Eq:witnessGHZ} \cite{Toth2005Detecting}.

In the following, we construct the witness $\mathcal{W}_{\Psi_{\phi}^{\theta}}^2$ in Eq~\eqref{Eq:2LM2} by finding generalized stabilizers of $\ket{\Psi_{\phi}^{\theta}}$. Here ``generalized'' means that the stabilizer may be not in the Pauli tensor form.

It is not hard to see that the last $N-1$ stabilizers $S_2,S_3,\cdots S_N$ in Eq~\eqref{Eq:StaGHZ} also stabilize $\ket{\Psi_{\phi}^{\theta}}$. Thus, we only need to find the first updated one $S_1'$. The construction is based on the following fact: if $S$ stabilizes $\ket{\Psi}$, i.e., $S\ket{\Psi}=\ket{\Psi}$, $USU^{\dag}$ stabilizes $U\ket{\Psi}$. Note that $\ket{\Psi_{\phi}^{\theta}}$ can be prepared from the noisy circuit described in Fig.~\ref{Fig:GHZcir}.

Initially, the stabilizers of the product state $\ket{0}^{\otimes N}$ is $\sigma_z^1, \sigma_z^2,\cdots, \sigma_z^N$; after the single qubit unitary on the first qubit, $\sigma_z^1$ becomes
\begin{equation}\label{}
\begin{aligned}
U_1'\sigma_z^1U_1'^{\dag}=\cos(2\theta)\sigma_z^1 + \sin(2\theta)\left(\cos\phi\sigma_x^1+\sin\phi\sigma_y^1\right);
\end{aligned}
\end{equation}
finally, after the successive application of CNOT gates, the stabilizer shows,
\begin{equation}\label{Eq:Sta1new}
\begin{aligned}
S_1'=U_{CNOT}U_1'\sigma_z^1U_1'^{\dag}U_{CNOT}^{\dag}
\end{aligned}
\end{equation}
where $U_{CNOT}=[C_{N-1}X_N]\cdots [C_{2}X_3] [C_{1}X_2]$ is the CNOT gate sequence, with $C_iX_j$ denoting the CNOT gate on the qubit $j$ controlled by the qubit $i$. One can find that
\begin{equation}\label{Eq:Sta1new}
\begin{aligned}
S_1'=&\cos(2\theta)\sigma_z^1\otimes \mathbb{I}^{\otimes  N-1}+\\ &\sin(2\theta)\left(\cos\phi\sigma_x^{\otimes N}+\sin\phi\sigma_y^1\otimes\sigma_x^{\otimes N-1}\right)
\end{aligned}
\end{equation}
on account the following relations,
\begin{equation}\label{}
\begin{aligned}
&C_iX_j \sigma_x^i C_iX_j=\sigma_x^i\sigma_x^j,\\
&C_iX_j \sigma_y^i C_iX_j=\sigma_y^i\sigma_x^j,\\
&C_iX_j \sigma_z^i C_iX_j=\sigma_z^i.\\
\end{aligned}
\end{equation}

Similar as Eq.~\eqref{Eq:ProGHZ}, we can define two projectors $P_1'$ and $P_2'$ associated with the stabilizers of the state $\ket{\Psi_{\phi}^{\theta}}$, that is, $P_1'=\frac1{2}(\mathbb{I}+S_1')$ and $P_2'=P_2$. Then based on the witness $\mathcal{W}_{\Psi_{\phi}^{\theta}}$ in Eq.~\eqref{Eq:witnessF2}, we have the new witness with less LMSs,
\begin{equation}\label{Eq:2W2Proj}
\begin{aligned}
\mathcal{W}_{\Psi_{\phi}^{\theta}}^2&=\frac1{2}\left[(\max\{\cos^2\theta, \sin^2\theta\}+1) \mathbb{I}-(P_1'+P_2')\right]\\
&=\frac{2\max\{\cos^2\theta, \sin^2\theta\}+1}{4} \mathbb{I}-\mathcal{Z}-\frac{1}{4}S_1',
\end{aligned}
\end{equation}
with $S_1'$ given in Eq.~\eqref{Eq:Sta1new}. Similar as Eq.~\eqref{Eq:GHZproj}, one can verify $2\mathcal{W}_{\Psi_{\phi}^{\theta}}^2\geq \mathcal{W}_{\Psi_{\phi}^{\theta}}$ and the witness $\mathcal{W}_{\Psi_{\phi}^{\theta}}^2$ is valid.
\end{proof}

Similar as Sec.~\ref{sSec:EEWnm1}, one should find the minimal expectation value of all the witnesses in the family of Eq.~\eqref{Eq:2LM2}, i.e., $\min_{\phi,\theta} \mathrm{Tr}(\mathcal{W}_{\Psi_{\phi}^{\theta}}^2\rho_{pre})$, parameterized by $\phi$ and $\theta$. Equivalently,
\begin{equation}\label{Eq:2max2}
\begin{aligned}
\max_{\phi,\theta}&\left\{\cos (2\theta) \langle\mathcal{M}_z\rangle+\sin (2\theta)(\cos \phi \langle\mathcal{M}_x\rangle + \sin \phi \langle\mathcal{M}'_x\rangle)\right\}\\
&=\sqrt{\langle\mathcal{M}_z\rangle^2+\langle\mathcal{M}_x\rangle^2+\langle\mathcal{M}'_x\rangle^2}
\end{aligned}
\end{equation}
The optimal $\phi_{opt}$ to saturate the maximal value is given in Eq.~\eqref{Eq:optPhiTwo1}, and the optimal $\theta_{opt}$ satisfies
\begin{equation}\label{Eq:OpThetaTwo}
\begin{aligned}
\tan (2\theta_{opt})=\frac{\sqrt {\langle\mathcal{M}_x\rangle^2+\langle\mathcal{M}'_x\rangle^2}}{\langle\mathcal{M}_z\rangle},
\end{aligned}
\end{equation}
and $[\cos(2\theta_{opt}), \sin(2\theta_{opt})]$ is in the same quadrant with $\left(\langle\mathcal{M}_z\rangle, \sqrt {\langle\mathcal{M}_x\rangle^2+\langle\mathcal{M}'_x\rangle^2}\right)$.

Consequently, one can choose the optimal witness in the family of Eq.~\eqref{Eq:2LM2} to detect the entanglement, based on the associated optimal parameters $\phi_{opt}$ and $\theta_{opt}$. For instance, for the noisy state shown in Eq.~\eqref{Eq:Snoisy2}, the corresponding white noise tolerance shows,
\begin{equation}\label{Eq:twoTol2}
\begin{aligned}
p<\frac2{3}\min\{\cos^2\theta_{opt}, \sin^2\theta_{opt}\},
\end{aligned}
\end{equation}
for $N \rightarrow \infty$ no matter what value $\phi$ is, and the proof is left in Appendix \ref{Sec:AnoiseFinal}.

On the other hand, similar as the discussion at the end of Sec.~\ref{sSec:EWnm2}, one can also apply the detection protocol given in Sec.~\ref{sSec:EEWnm1} on the noise model 2 here. That is, using the optimal witness in the family of Eq.~\eqref{Eq:witnessTwo}. The optimization in Eq.~\eqref{Eq:2min} can determine the corresponding noise parameter $\phi_{opt}$ and thus the optimal witness. For the noisy state in Eq.~\eqref{Eq:Snoisy2}, the corresponding white noise tolerance reads,
\begin{equation}\label{Eq:NT1on2two}
\begin{aligned}
p<1-\frac{2}{\sin(2\theta)+2}.
\end{aligned}
\end{equation}
as $N\rightarrow\infty$. The proof is also left in Appendix \ref{Sec:AnoiseFinal}. Similar as the comparison at the end of Sec.~\ref{sSec:EWnm2}, the white noise tolerance in Eq.~\eqref{Eq:NT1on2two} is better than the one in Eq.~\eqref{Eq:twoTol2}.

The efficient detection protocol under the noise model 2 employs the same set of $3$ LMSs as the one in Sec.~\ref{sSec:EEWnm1}, but abstracts both noise parameters $\phi_{opt}$ and $\theta_{opt}$. This is because here we post-process measurement results more delicately. Even though the white noise tolerance of the corresponding witness in Eq.~\eqref{Eq:twoTol2} is not better than the one in Eq.~\eqref{Eq:NT1on2two}, the experiment system can be further improved with the noise parameters $\phi_{opt}$ and $\theta_{opt}$ extracted from the measurement results.
\section{conclusion and outlook}\label{Sec:conclude}
In this paper, by focusing on GHZ-like states, we propose an entanglement detection protocol to enhance the detection ability under some practical coherent noises, which only adds one LMS comparing to the original witness method. Our protocol can feedback the noisy parameters by postprocessing and further help to improve the experimental system.
The main idea behind the protocol is that we construct a set of measurements which can tomography all possible states affected by the coherent noise, and thus realize a family of entanglement witnesses. In addition, we further reduce the number of LMSs to 3, which makes the experimental realization more efficient.

There are a few prospective problems that can be explored in the future.
First, it is shown in the paper that even if one can obtain more parameters about the prepared state by delicate postprocessing, it may be not beneficial to the entanglement detection as shown by the noise tolerance comparison in Fig.~\ref{Fig:WNCompare}. Thus it is significant to investigate further whether it is a general phenomenon. Second, it is interesting to extend the current protocol to more general states, such as permutation-invariant states \cite{Toth2010Permutation,Zhou2019Decomposition} and stabilizer states \cite{gottesman1997stabilizer,Nielsen2011Quantum}, where quantum error correction or mitigation methods can be applied to eliminate or reduce the effect of coherent noises. Third, it is also intriguing to study the entanglement detection under other types of coherent noises, which appear in certain experimental systems. In addition, the detection of more detailed multipartite entanglement structures \cite{Huber2013Structure,Lu2018Structure,You2019graph} under coherent noises is significant to investigate.

\acknowledgments
We thank Chenghao Guo, Xiongfeng Ma and Qi Zhao for the insightful discussions. This work was supported by the National Natural Science Foundation of China Grants No.~11875173 and No.~11674193, and the National Key R\&D Program of China Grant No.~2017YFA0303900.

\appendix
\section{Derivation of miscellaneous noise tolerances in Eq.~\eqref{Eq:noiseTcom}, \eqref{Eq:noiseTcom2} , \eqref{Eq:Tol2} and \eqref{Eq:NT1on2}}\label{Sec:noiseT}
First, let us focus on Eq.~\eqref{Eq:noiseTcom} and \eqref{Eq:noiseTcom2}, which are noise tolerances of $\mathcal{W}_{GHZ}$ in Eq.~\eqref{Eq:witnessGHZ} for the noise model 1 and 2 respectively. Since the noisy state in Eq.~\eqref{Eq:Snoisy2} is more general than that in Eq.~\eqref{Eq:Snoisy}, we only show the derivation of Eq.~\eqref{Eq:noiseTcom2} here.

\begin{equation}\label{}
\begin{aligned}
\mathrm{Tr}(\mathcal{W}_{GHZ}\rho_{pre})&=\mathrm{Tr}\Bigg{\{}\left[\frac{1}{2} \mathbb{I}-\ket{GHZ}\bra{GHZ}\right]\\
&\ \ \ \ \ \ \ \ \
\left[(1-p)\ket{\Psi_{\phi}^{\theta}}\bra{\Psi_{\phi}^{\theta}}+p\frac{\mathbb{I}}{2^N}\right]\Bigg{\}}\\
&=\frac1{2}-(1-p)|\bra{GHZ}\Psi_{\phi}^{\theta}\rangle|^2-\frac{p}{2^N}\\
&=\frac1{2}-\frac{1-p}{2}[1+\sin(2\theta)\cos(\phi)]-\frac{p}{2^N}<0.
\end{aligned}
\end{equation}
As $N \rightarrow \infty$, one has
\begin{equation}\label{}
\begin{aligned}
\sin(2\theta)\cos(\phi)> \frac{p}{1-p}.
\end{aligned}
\end{equation}

Then, let us consider Eq.~\eqref{Eq:Tol2}, which is the tolerance of the optimal witness in Eq.~\eqref{Eq:witnessF2} for the noisy state in Eq.~\eqref{Eq:Snoisy2}. Since the fidelity optimization in Eq.~\eqref{Eq:Nmax2} helps us to determine the parameters $\phi$ and $\theta$ of the prepared state in Eq.~\eqref{Eq:Snoisy2}, one can choose the witness with the same parameters in Eq.~\eqref{Eq:witnessF2} and the noise tolerance shows,
\begin{equation}\label{}
\begin{aligned}
\mathrm{Tr}(\mathcal{W}_{\Psi_{\phi}^{\theta}}\rho_{pre})&=\mathrm{Tr}\Bigg{\{}\left[\max\{\cos^2\theta, \sin^2\theta\}\mathbb{I}-\ket{\Psi_{\phi}^{\theta}}\bra{\Psi_{\phi}^{\theta}}\right]\\
&\ \ \ \ \ \ \ \ \  \left[(1-p)\ket{\Psi_{\phi}^{\theta}}\bra{\Psi_{\phi}^{\theta}}+p\frac{\mathbb{I}}{2^N}\right]\Bigg{\}}\\
&=\max\{\cos^2\theta, \sin^2\theta\}-(1-p)-\frac{p}{2^N}<0.
\end{aligned}
\end{equation}
As $N \rightarrow \infty$, one has
\begin{equation}\label{}
\begin{aligned}
p<1-\max\{\cos^2\theta, \sin^2\theta\}=\min\{\cos^2\theta, \sin^2\theta\}.
\end{aligned}
\end{equation}

Finally, let us derive Eq.~\eqref{Eq:NT1on2}, which is the tolerance of the optimal witness in Eq.~\eqref{Eq:witnessF} for the noisy state in Eq.~\eqref{Eq:Snoisy2}. Since the fidelity optimization in Eq.~\eqref{Eq:Nmax} helps us to determine the parameter $\phi$ of the prepared state in Eq.~\eqref{Eq:Snoisy2}, one can choose the witness with the same parameter $\phi$ in Eq.~\eqref{Eq:witnessF} and the noise tolerance shows,
\begin{equation}\label{}
\begin{aligned}
\mathrm{Tr}(\mathcal{W}_{\Psi_{\phi}}\rho_{pre})&=\mathrm{Tr}\Bigg{\{}\left[\frac1{2} \mathbb{I}-\ket{\Psi_{\phi}}\bra{\Psi_{\phi}}\right]\\
&\ \ \ \ \ \ \ \ \ \left[(1-p)\ket{\Psi_{\phi}^{\theta}}\bra{\Psi_{\phi}^{\theta}}+p\frac{\mathbb{I}}{2^N}\right]\Bigg{\}}\\
&=\frac1{2}-(1-p)|\bra{\Psi_{\phi}}\Psi_{\phi}^{\theta}\rangle|^2-\frac{p}{2^N}\\&=\frac1{2}-\frac{1-p}{2}[1+\sin(2\theta)]-\frac{p}{2^N}<0.
\end{aligned}
\end{equation}
As $N \rightarrow \infty$, one has
\begin{equation}\label{}
\begin{aligned}
p<1-\frac{1}{1+\sin(2\theta)}.
\end{aligned}
\end{equation}
\section{Proof of decompositions in Eq.~\eqref{Eq:DecOff}}\label{prf:thm1}

First, note that the matrix form of $\cos \theta_k \sigma_x+ \sin \theta_k \sigma_y$ shows,
\begin{equation}
\begin{aligned}
 \cos \theta_k \sigma_x+ \sin \theta_k \sigma_y&=\left(
 \begin{array}{cc}
    0 & e^{-i\theta_k}\\
    e^{i\theta_k} & 0\\
  \end{array}
\right)\\
&=e^{-i\theta_k}\ket{0}\bra{1}+e^{i\theta_k}\ket{1}\bra{0},
\end{aligned}
\end{equation}

Let $l(b)=\sum_{i=0}^N b_i$ denote the weight of the binary string $b\in \{0,1\}^N$, and $\bar{b}$ is the bitwise inverse of $b$ with $\bar{b}_i=(b_i+1)\mod  2$. We can further rewrite the product operator $\mathcal{M}_{\theta_k}$ in Eq.~\eqref{Eq:LMSN} in the computational basis as follows.
\begin{equation}
\begin{aligned}
 \mathcal{M}_{\theta_k}=&\sum_b e^{i[l(b)-l(\bar{b})]\theta_k}\ket{b}\bra{\bar{b}}\\
=&\sum_b e^{i[2l(b)-N]\theta_k}\ket{b}\bra{\bar{b}}\\
=&\left(
 \begin{array}{ccc}
       & & e^{-iN\theta_k}        \\
         &\cdots&             \\
    e^{iN\theta_k} & & \\
  \end{array}
\right).
\end{aligned}
\end{equation}
Here in the second line we use the fact $l(b)+l(\bar{b})=N$ and $l(b)\in \{0,1,\cdots,N\}$.  Note that $\mathcal{M}_{\theta_k}$ only possesses terms on off-diagonal positions. For the clearness of the latter decomposition, we add a corresponding phase on $\mathcal{M}_{\theta_k}$ as,
\begin{equation}\label{Eq:AppLM}
\begin{aligned}
\mathcal{M}'_{\theta_k}&\equiv e^{iN\theta_k}\mathcal{M}_{\theta_k}=\sum_b e^{i2 \theta_k  l(b) }\ket{b}\bra{\bar{b}}\\
&=\sum_b e^{i\frac{2\pi k}{N+1} l(b)}\ket{b}\bra{\bar{b}}.
\end{aligned}
\end{equation}

From Eq.~\eqref{Eq:noiseOff} and \eqref{Eq:noiseOff1} in Main Text, $\mathcal{X}$ contains two terms $\mathcal{X}_+$and $\mathcal{X}_-$, and we rewrite them as,
\begin{equation}\label{}
\begin{aligned}
\mathcal{X}_+&=\frac1{2}\mathcal{X}_+',\\
\mathcal{X}_-&=\frac1{2i}\mathcal{X}_-',
\end{aligned}
\end{equation}
where $\mathcal{X}_+'$ and $\mathcal{X}_-'$ having matrix forms in the computational basis as,
\begin{equation}
\begin{aligned}
\mathcal{X}_+'=&\left(
 \begin{array}{ccccc}
      & & & & 1        \\
              & & &0 &                \\
       & & \cdots & &              \\
              & 0 & & &                    \\
   1 & & & &\\
  \end{array}
\right),\\
 \mathcal{X}_-'=&\left(
 \begin{array}{ccccc}
      & & & & 1        \\
              & & &0 &                \\
         & & \cdots & &               \\
              & 0 & & &                    \\
   -1 & & & &\\
  \end{array}
\right).
\end{aligned}
\end{equation}
Note that they show specific forms on the off-diagonal positions.

In the following, we derive the decomposition of $\mathcal{X}_+'$ and $\mathcal{X}_-'$ in terms of $\mathcal{M}_{\theta_k}'$ using discrete Fourier transform.
Note that $\mathcal{M}_{\theta_k}'$ shows the same coefficient on the terms $\ket{b}\bra{\bar{b}}$, if they share the same $l(b)$. Thus we only need to care about the weight of the binary $b$ and denote $t=l(b)$, which is the analog of the ``time'' domain, with $t=[0,1,\cdots,N]$. It is clear that
the function of $\mathcal{M}_{\theta_k}'$ on this domain is the Fourier basis function $e^{i\frac{2\pi k}{N+1} t}$, with the parameter $k$ being the analog of the ``frequency" domain. The corresponding functions of $\mathcal{X}_+'$ and $\mathcal{X}_-'$ on the time domain are $f_+(t)=[1,0\cdots,1]$ and $f_-(t)=[1,0\cdots,-1]$, respectively. By applying discrete Fourier transform, one has the coefficients showing
\begin{equation}
\begin{aligned}
F_+(k)=\frac1{N+1}\sum_{t=0}^N e^{-i\frac{2\pi k}{N+1}t}f_+(t)=\frac{1+e^{-i\frac{2\pi k N}{N+1}}}{N+1},\\
F_-(k)=\frac1{N+1}\sum_{t=0}^N e^{-i\frac{2\pi k}{N+1}t}f_-(t)=\frac{1-e^{-i\frac{2\pi k N}{N+1}}}{N+1}.
\end{aligned}
\end{equation}

Combing these coefficients with the operators, we have,
\begin{equation}\label{}
\begin{aligned}
\mathcal{X}_+&=\frac1{2}\sum_{k=0}^N  F_+(k) \mathcal{M}_{\theta_k}',\\
&= \frac1{2}\sum_{k=0}^N \frac{1+e^{-i\frac{2\pi k N}{N+1}}}{N+1} e^{iN\theta_k}\mathcal{M}_{\theta_k},\\
&=\frac1{N+1}\sum_{k=0}^N\cos(\frac{\pi k N}{N+1}) \mathcal{M}_{\theta_k},\\
&=\frac1{N+1}\sum_{k=0}^N (-1)^k \cos(\theta_k) \mathcal{M}_{\theta_k},
\end{aligned}
\end{equation}
where the last line is on account of $\frac{\pi kN}{N+1}=k\pi-\theta_k.$ Similarly,
\begin{equation}\label{}
\begin{aligned}
\mathcal{X}_-&=\frac1{2i}\sum_{k=0}^N  F_-(k) \mathcal{M}_{\theta_k}',\\
&= \frac1{2i}\sum_{k=0}^N \frac{1-e^{-i\frac{2\pi k N}{N+1}}}{N+1} e^{iN\theta_k}\mathcal{M}_{\theta_k},\\
&=\frac1{N+1}\sum_{k=0}^N\sin(\frac{\pi k N}{N+1}) \mathcal{M}_{\theta_k},\\
&=\frac{-1}{N+1}\sum_{k=0}^N (-1)^k \sin(\theta_k) \mathcal{M}_{\theta_k}.
\end{aligned}
\end{equation}
\section{Comparison between the white noise tolerances in Eq.~\eqref{Eq:Tol2} and~\eqref{Eq:NT1on2}}\label{Sec:CompareN}
Here, we compare the white noise tolerance in Eq.~\eqref{Eq:Tol2} using a family of witnesses $\mathcal{W}_{\Psi_{\phi}^{\theta}}$ with that using
a family of witnesses $\mathcal{W}_{\Psi_{\phi}}$ in Eq.~\eqref{Eq:witnessF}, for the noisy state in Eq.~\eqref{Eq:Snoisy2}.  In the following, we show the difference of them denoted by the function,
\begin{equation}\label{}
\begin{aligned}
g(\theta)=[1-\frac{1}{\sin(2\theta)+1}]-\min\{\cos^2\theta, \sin^2\theta\}\geq 0,
\end{aligned}
\end{equation}
where $\theta\in[0,\frac{\pi}{2}]$ and $g(\theta)=0$ as $\theta=0,\frac{\pi}{4}, \frac{\pi}{2}$. Note that $g(\theta)$ is symmetric with respective to $\theta=\frac{\pi}{4}$. Thus we only need to consider the regime $\theta\in[0,\frac{\pi}{4}]$,
\begin{equation}\label{}
\begin{aligned}
g(\theta)=&[1-\frac{1}{\sin(2\theta)+1}]-\sin^2\theta \\
&=\cos^2\theta-\frac{1}{\sin(2\theta)+1}\\
&=\cos^2\theta-\frac{1}{(\cos\theta+\sin\theta)^2} \geq 0
\end{aligned}
\end{equation}
equivalently, $\cos\theta(\cos\theta+\sin\theta)-1\geq 0$, that is, $\cos\theta\sin\theta\geq \sin^2\theta\rightarrow1\geq \tan \theta$ or $\sin\theta=0$. This true since $\theta\in[0,\frac{\pi}{4}]$.

\section{Derivation of noise tolerances in Eq.~\eqref{Eq:twoTol2}, \eqref{Eq:NT1on2two} and the comparison}\label{Sec:AnoiseFinal}
First, let us derive the noise tolerances in Eq.~\eqref{Eq:twoTol2}. Hereafter, we use $\phi$ and $\theta$ to denote $\phi_{opt}$ and $\theta_{opt}$ without confusion, and also denote $f(\theta)=\max\{\cos^2\theta, \sin^2\theta\}$ in $\mathcal{W}_{\Psi_{\phi}^{\theta}}^2$ for simplicity.

Note that in Eq.~\eqref{Eq:2W2Proj}, $\mathcal{W}_{\Psi_{\phi}^{\theta}}^2$ is written in the following form,
\begin{equation}\label{Eq:appEWP12}
\begin{aligned}
2\mathcal{W}_{\Psi_{\phi}^{\theta}}^2&=(f(\theta)+1) \mathbb{I}-(P_1'+P_2'),
\end{aligned}
\end{equation}
where $P_1'$ and $P_2'$ are two projectors determined by the stabilizers of the state $\ket{\Psi_{\phi}^{\theta}}$.

The noise tolerance is determined by $\mathrm{Tr}(\mathcal{W}_{\Psi_{\phi}^{\theta}}^2\rho_{pre})<0$. Inserting the witness of Eq.~\eqref{Eq:appEWP12} and the noisy state $\rho_{pre}$ of Eq.~\eqref{Eq:Snoisy2}, one has
\begin{equation}\label{Eq:AppNT}
\begin{aligned}
&\mathrm{Tr}\left\{\left[(f(\theta)+1) \mathbb{I}-(P_1'+P_2')\right]\rho_{pre}\right\}\\
&=(f(\theta)+1)-\mathrm{Tr}\left\{(P_1'+P_2')\left[(1-p)\ket{\Psi_{\phi}^{\theta}}\bra{\Psi_{\phi}^{\theta}}+p\frac{\mathbb{I}}{2^N}\right]\right\}\\
&=(f(\theta)+1)-2(1-p)-\frac{p}{2^N}\mathrm{Tr}( P_1'+ P_2'  )\\
&=(f(\theta)+1)-2(1-p)-\frac{p(2^{N-1}+2)}{2^N}<0,\\
\end{aligned}
\end{equation}
where in the third line we use the fact that $P_1'$ and $P_2'$ stabilize $\ket{\Psi_{\phi}^{\theta}}$, and in the final line $\mathrm{Tr}(P_1')=2^{N-1}$, $\mathrm{Tr}(P_2')=2$. From Eq.~\eqref{Eq:AppNT}, it is not hard to see that
\begin{equation}\label{}
\begin{aligned}
p=\frac{2^{N-1}}{3*2^{N-2}-1}(1-f(\theta))\simeq\frac{2}{3}\min\{\cos^2\theta, \sin^2\theta\},
\end{aligned}
\end{equation}
as $N \rightarrow \infty$.

Second, let us derive Eq.~\eqref{Eq:NT1on2two}, which is the tolerance of the optimal witness in Eq.~\eqref{Eq:witnessTwo} for the noisy state in Eq.~\eqref{Eq:Snoisy2}. Since the optimization in Eq.~\eqref{Eq:2min} helps to determine the parameter $\phi$ of the prepared state in Eq.~\eqref{Eq:Snoisy2}, one can choose the witness with the same parameter $\phi$ in Eq.~\eqref{Eq:witnessTwo}. Consequently, the white noise tolerance is the same with the noisy state,
\begin{equation}\label{Eq:Snoisy3}
\begin{aligned}
\rho_{pre}=(1-p)\ket{\Psi^{\theta}}\bra{\Psi^{\theta}}+p\frac{\mathbb{I}}{2^N},
\end{aligned}
\end{equation}
where $\ket{\Psi^{\theta}}=\cos\theta\ket{0}^{\otimes N}+\sin\theta \ket{1}^{\otimes N}$, under the detection of the witness $\mathcal{W}^2_{GHZ}$ in Eq.~\eqref{Eq:EWGHZtwo}.
\begin{equation}\label{Eq:AppNT1}
\begin{aligned}
&\mathrm{Tr}(\mathcal{W}^2_{GHZ}\rho_{pre})\\
&=\mathrm{Tr}\Bigg{\{}\left[\frac3{2} \mathbb{I}-(P_1+P_2)\right]\left[(1-p)\ket{\Psi^{\theta}}\bra{\Psi^{\theta}}+p\frac{\mathbb{I}}{2^N}\right]\Bigg{\}}\\
&=\frac3{2}-(1-p)[\frac3{2}+\frac1{2}\sin(2\theta)]-\frac{p}{2^N}\mathrm{Tr}( P_1+ P_2 )\\
&=\frac3{2}-(1-p)[\frac3{2}+\frac1{2}\sin(2\theta)]-\frac{p(2^{N-1}+2)}{2^N}<0.
\end{aligned}
\end{equation}
Here in the second line we applies the formula of $\mathcal{W}^2_{GHZ}$ in Eq.~\eqref{Eq:GHZstab}. The second line is due to the fact that $P_2$ stabilizes $\ket{\Psi^{\theta}}$ and $\mathrm{Tr}(P_1\ket{\Psi^{\theta}}\bra{\Psi^{\theta}})=\frac1{2}+\frac1{2}\sin(2\theta)$, and the final line $\mathrm{Tr}(P_1)=2^{N-1}$, $\mathrm{Tr}(P_2)=2$. From Eq.~\eqref{Eq:AppNT1}, it is not hard to see that
\begin{equation}\label{}
\begin{aligned}
p<\frac{\sin(2\theta)}{\sin(2\theta)+2-2^{2-N}}\simeq1-\frac{2}{\sin(2\theta)+2},
\end{aligned}
\end{equation}
as $N \rightarrow \infty$.

Finally, let us compare the two noise tolerances in Eq.~\eqref{Eq:twoTol2} and \eqref{Eq:NT1on2two}. Similar as Appendix \ref{Sec:CompareN}, we define the function $l(\theta)$ as the subtraction and consider the regime $\theta\in(0,\frac{\pi}{4})$ due to the symmetry,
\begin{equation}\label{}
\begin{aligned}
l(\theta)=&[1-\frac{2}{\sin(2\theta)+2}]-\frac2{3}\sin^2\theta \\
&=\frac2{3}\left(\frac1{2}+\cos^2\theta-\frac{3}{\sin(2\theta)+2}\right) \geq 0.
\end{aligned}
\end{equation}
That is, $(\frac1{2}+\cos^2\theta)(\sin(2\theta)+2)\geq 3$. After simplification, it is equivalent to $2+\cos(2\theta)\geq 2\tan(\theta)$, which is right for $\theta\in(0,\frac{\pi}{4})$.

%

\end{document}